\newcommand{\blind}{\textsf{blind}}
\newcommand{\sign}{\textsf{sign}}
\newcommand{\ep}{\textsf{EP}}
\newtheorem{definition}{Definition}
\newtheorem{theorem}{Theorem}
\newtheorem{lemma}{Lemma}
\newtheorem{remark}{Remark}
\newtheorem{corollary}{Corollary}
\newtheorem{example}{Example}
\title{Elementary Deduction Problem for Locally Stable Theories with Normal Forms\thanks{Work supported by grants from the CNPq/CAPES \emph{Science without Borders}  programme and FAPDF PRONEX.}}
\author{Mauricio Ayala-Rinc\'{o}n \thanks{Author partially supported by CNPq.}
\institute{Departamentos de Matem\'{a}tica e\\
Computa\c{c}\~{a}o \\ Grupo de Teoria da Computa\c{c}\~{a}o}
\institute{Universidade de Bras\'{i}lia, Brazil}
\email{ayala@unb.br}
\and
Maribel Fern\'{a}ndez
\institute{Department of Informatics}
\institute{King's College London,  UK}
\email{maribel.fernandez@kcl.ac.uk}
\and
Daniele Nantes-Sobrinho\thanks{Corresponding author. Author supported by CNPq}
\institute{Departamento de Matem\'{a}tica \\ Grupo de Teoria da Computa\c{c}\~{a}o}
\institute{Universidade de Bras\'{i}lia, Brazil}
\email{dnantes@mat.unb.br}
}
\begin{document}
\maketitle

\begin{abstract}
We present an algorithm to decide the intruder deduction problem (IDP) for a class of locally stable theories enriched with normal forms.  Our result relies on a new and efficient algorithm to solve a restricted case of higher-order associative-commutative matching, obtained by combining the \emph{Distinct Occurrences of AC-matching} algorithm and a standard algorithm to solve systems of linear Diophantine equations. A translation between natural deduction and sequent calculus allows us to use the same approach to decide the \emph{elementary deduction problem} for locally stable theories. As an application, we model the theory of blind signatures and derive an algorithm to decide IDP in this context,  extending previous decidability results.
\end{abstract}

\section*{Introduction}
There are different approaches to model cryptographic protocols and to analyse their security properties~\cite{survey}. One technique consists of proving that an attack requires solving an algorithmically hard problem; another consists of using a process calculus, such as the spi-calculus~\cite{spi},  to represent the operations performed by the participants and the attacker. In recent years, the deductive approach of Dolev and Yao~\cite{dolev}, which abstracts from algorithmic details and models an attacker by a deduction system, has successfully shown the existence of flaws in well-known protocols.  A  deduction system under Dolev-Yao's approach specifies how the attacker can obtain new information from previous knowledge  obtained either by eavesdropping the communication between honest protocol participants (in the case of a passive attacker), or by eavesdropping and fraudulently emitting messages (in the case of an active attacker).  The \emph{intruder deduction problem} (IDP) is the question of whether a passive eavesdropper can obtain a certain information from messages observed on the network.

Abadi and Cortier's approach~\cite{AbadiCo2006} proposes conditions for analysing message deducibility and indistinguishability relations for security protocols  modelled in the applied pi-calculus~\cite{appliedpicalculus}. In particular,~\cite{AbadiCo2006} shows that IDP is decidable for \emph{locally stable} theories. However, to ensure the soundness of this approach, the definition of locally stable theories given in~\cite{AbadiCo2006} needs to be modified (as confirmed via personal communication with the second author of~\cite{AbadiCo2006}). In this work, we made the necessary modifications and propose a new approach to solve IDP in the context of locally stable theories.

Our notion of locally stable theory is based on the existence of a finite and computable saturated set, but, unlike~\cite{AbadiCo2006}, our saturated sets include normal forms\footnote{With this simple modification, the correctness proof in~\cite{AbadiCo2006} can also be carried out, fixing a gap in Lemma 11.}.  The new approach we propose in order to prove the decidability of IDP is based on an algorithm to solve a restricted case of higher-order associative-commutative matching (AC-matching). To design this algorithm  we use  well-known results for solving  systems of linear Diophantine equations (SLDE)~\cite{BouConDe,ClauFor,frumkin,papadimitriou}, which we  combine with a polynomial algorithm to solve the DO-ACM problem (Distinct Occurrences of AC-Matching)~\cite{narendran}.  

In the case where the signature of the equational theory contains, for each AC function symbol $\oplus$, its corresponding inverse $i_{\oplus}$, we obtain a decidability result which is polynomial with relation to the size of the saturated set (built from the initial knowledge of the intruder). 
Thanks to the use of the algorithm for solving SLDE over $\mathbb{Z}$, we avoid an exponential time search over the solution space in the case of AC symbols (improving over~\cite{AbadiCo2006}, where an exponential number of possible combinations have to be considered). For more details we refer the reader to the extended version of this paper \cite{AFSextended}.

After introducing the class of locally stable theories and proving the decidability of the IDP for protocols in this class, we show that the Elementary Deduction Problem (EDP) introduced in \cite{TiGo2009} is also decidable in polynomial time with relation to the size of a saturated set of terms. EDP is stated as follows: given a set $\Gamma$ of messages  and a message $M$, is there an $E$-context $C[\ldots]$ and messages $M_1,\ldots, M_k\in \Gamma$ such that $C[M_1,\ldots, M_k]\approx_{E}M$? Here, $E$ is the equational theory modelling the protocol. We use this approach to model theories with blind signatures. As an application, using a previous result that links the decidability of the EDP to the decidability of the IDP when the theory $E$ satisfies certain conditions, we obtain decidability of IDP for a subclass of locally stable theories combined with the theory $B$ of blind signatures. In this way, we generalise a result from~\cite{AbadiCo2006} (Section 5.2.4): it is not necessary to prove that the combination of the theories $E$ and $B$ is locally stable.

\textbf{Related Work.} The analysis of cryptographic protocols has attracted a lot of attention in the last years and several tools are available to try to identify possible attacks, see Maude-NPA~\cite{maudeNPA07}, ProVerif~\cite{proverif}, CryptoVerif~\cite{cryptoverif}, Avispa~\cite{avispa}, Yapa~\cite{YAPA}.  

Sequent calculus formulations of Dolev Yao intruders~\cite{Tiu2007} have been used in a formulation of open bisimulation for the spi-calculus. In~\cite{TiGo2009}, deductive techniques for dealing with a protocol with blind signatures in mutually disjoint AC-convergent equational theories, containing a unique AC operator each, are considered. As an alternative approach, the intruder's deduction capability is modelled inside a sequent calculus modulo a rewriting system, following the approach of~\cite{BeC06}. Then, the IDP is reduced in polynomial time to EDP.

By combining the techniques in~\cite{TiGo2009} and~\cite{Bursucconstraints},  the IDP formulation for an Electronic Purse Protocol with blind signatures was proved to reduce in polynomial time to EDP for an AC-convergent theory containing three different $AC$ operators and rules for exponentiation~\cite{nantesayala}, extending the previous results. However, no algorithm was provided to decide EDP.
More precisely, assuming that EDP is solved in time $O(f(n))$,  it was proved that IDP reduces polynomially to EDP with complexity $O(n^k \times f(n))$, for some constant $k$. Thus, whenever the former problem is polynomial, the IDP is also polynomial.

\textbf{Contributions.} We present a technique to decide EDP or IDP in
AC-convergent equational theories. Our approach is based on a ``local
stability'' property inspired by~\cite{AbadiCo2006}, instead of
proving that the deduction rules are ``local'' in the sense
of~\cite{mcallester} as done in many previous works~\cite{CoLuSh03,
  De2006, Laf07, Bursucconstraints}.  More precisely, the main
contributions of this paper are:
\begin{itemize}
\item We adapt and refine the technique proposed in~\cite{AbadiCo2006}, where 
deducibility and indistinguishability relations are claimed to be decidable 
in polynomial time for  locally stable theories. 
First, we changed the definition of locally stable theories,
adding normal forms, which are needed to carry out the decidability proofs.
Second, we designed a new algorithm to decide IDP in locally stable theories.
The algorithm provided in~\cite{AbadiCo2006} is polynomial for the class
of subterm theories (Proposition 10 in~\cite{AbadiCo2006}), but the proof
does not extend directly to locally stable theories (despite the statement
in Proposition 16). Our algorithm relies on solving a restricted case of  higher-order
AC-matching problem that is used to decide the deduction relation. It
is a combination of two standard algorithms: one for solving the
DO-ACM problem \cite{narendran} which has a
polynomial bound in our case; and one for solving systems of Linear
Diophantine Equations(SLDE), which is polynomial in $\mathbb{Z}$~\cite{BouConDe,ClauFor,frumkin,papadimitriou}. Using this algorithm we prove that IDP is decidable in polynomial time with respect to the saturated set of terms, for locally stable theories with inverses.
\item A decidability result for the EDP for locally stable theories, which extends the work of Tiu and Gor\'{e}~\cite{TiGo2009}. As an application, we present a strategy to decide IDP for locally stable theories combined with blind signatures. Here, the combination of theories does not need to be locally stable.
\end{itemize}

In order to get the polynomial decidability result claimed in~\cite{AbadiCo2006} for
locally stable theories,  we had to restrict to theories that contain, for each $AC$
symbol in the signature, the corresponding inverse. The inverses are
necessary when we interpret our term algebra inside the integers $\mathbb{Z}$
to solve SLDE (terms headed by the inverse function will be seen as
negative integers). If the theory does not contain inverses, we would
have to solve the SLDE for $\mathbb{N}$  which is a well known NP-complete problem.

\section{Preliminaries}

Standard rewriting notation and notions are used (e.g. \cite{baader}).  We assume the following sets: a countably infinite set $N$ of \emph{names} (we use $a,b,c, m$ to denote names); a countably infinite set $X$ of  \emph{variables}  (we use  $x,y,z$ to denote variables); and a finite \emph{signature} $\Sigma$, consisting of function names and their arities. We write $arity(f)$ for the arity of a function $f$, and let $ar(\Sigma)$ be the maximal arity of a function symbol in $\Sigma$.

The set of \emph{terms} is generated by the following grammar:
$$M,N:= a\,|\, x\,|\, f(M_1,\ldots, M_n)$$
where $f$ ranges over the function symbols of $\Sigma$ and $n$ matches the arity of $f$, $a$ denotes a name in $N$ (representing principal names, nonces, keys, constants involved in the protocol, etc) and $x$ a variable. We denote by $V(M)$ the set of variables occurring in $M$. A message $M$ is \emph{ground} if $V(M)=\emptyset$. The \emph{size} $|M|$ of a term $M$ is
defined by $|u|=1$, if $u$ is a name or a variable; and $|f(M_1,\ldots, M_n)|=1+\sum_{i=1}^n|M_i|$.

The set of \emph{positions} of a term $M$, denoted by $\mathcal{P}os(M)$, is defined by $\mathcal{P}os(M):= \{\epsilon\}$, if $M$ is a name or a variable; and  $\mathcal{P}os(M):= \{\epsilon\} \cup \bigcup_{i=1}^n\{ip \,|\, p \in \mathcal{P}os(M_i)\}$, if $M=f(M_1, \ldots, M_n)$ where $f\in \Sigma$. 
The position $\epsilon$ is called the \emph{root} position.  The size of $|M|$ coincides with the cardinality of $\mathcal{P}os(M)$. The set of \emph{subterms} of $M$ is defined as $st(M)=\{M|_p \,|\, p \in \mathcal{P}os(M)\}$, where $M|_p$ denotes the subterm of $M$ at position $p$. For a set  $\Gamma$ of terms, the notion of subterm can be extended as usual: $st(\Gamma):= \bigcup_{M\in \Gamma}st(M)$. For $p \in \mathcal{P}os(M)$, we denote by $M[t]_p$ the term that is obtained from $M$ by replacing the subterm at position $p$ by $t$.

A term rewriting system (TRS) is a set $\mathcal{R}$ of oriented equations over terms in a given signature. For terms $s$ and $t$, $s\rightarrow_{\mathcal{R}} t$ denotes that $s$ rewrites to $t$ using an instance of a rewriting rule in $\mathcal{R}$. The transitive, reflexive-transitive and equivalence closures of $\rightarrow_{\mathcal{R}}$ are denoted by  $\stackrel{+}{\rightarrow}_{\mathcal{R}}, \stackrel{*}{\rightarrow}_{\mathcal{R}}$ and $\stackrel{*}{\leftrightarrow}_{\mathcal{R}}$, respectively. 
The equivalence closure of the rewriting relation, $\stackrel{*}{\leftrightarrow}_{\mathcal{R}}$, is 
denoted by $\approx_{\mathcal{R}}$. 

Given a TRS $\mathcal{R}$ in which some function symbols are assumed to be AC, and  two terms $s$ and $t$, $s\rightarrow_{\mathcal{R}\cup AC}t$ if there exists $w$ such that $s=_{AC}w$ and $w\rightarrow_{\mathcal{R}} t$, where $=_{AC}$ denotes equality modulo AC (according to the AC assumption on function symbols). For every term $s$, the set of normal forms $s\downarrow_{\mathcal{R}}$ (closed modulo AC) of $s$ is the set of terms $t$ such that $s\stackrel{*}{\rightarrow}_{\mathcal{R}\cup AC}t$ and $t$ is irreducible for $\rightarrow_{\mathcal{R}\cup AC}$. $\mathcal{R}$ is said to be AC-convergent whenever it is AC-terminating and AC-confluent.

We equip the signature $\Sigma$ with an equational theory $\approx_E$  induced by a set of $\Sigma$-equations $E$, that is,  $\approx_E $ is the smallest equivalence relation that contains $E$ and is closed under substitutions and compatible with $\Sigma$-contexts.  An equational theory $\approx_E$ is said to be equivalent to a TRS $\mathcal{R}$ whenever $\approx_{\mathcal{R}}\; =\; \approx_E$.  An equational theory $\approx_E$ is AC-convergent when it has an equivalent rewrite system $\mathcal{R}$ which is AC-convergent. In the next sections, given an AC-convergent equational theory $\approx_E$, normal forms of terms are computed with respect to the TRS $\mathcal{R}$ associated to $\approx_E$, unless otherwise specified. To simplify the notation we will denote by $E$ the  equational theory induced by the set of $\Sigma$-equations $E$. We will denote by $\Sigma_E$ the signature used in the set of equations  $E$. 
The \emph{size} $c_E$ of an equational theory $E$ with an associated TRS $\mathcal{R}$ consisting of rules $\bigcup_{i=1}^k\{l_i\rightarrow r_i\}$ is defined as $c_E=max_{1\leq i\leq k}\{|l_i|,|r_i|, ar(\Sigma)+1\}$. For $\mathcal{R}= \emptyset$, define $ c_E= ar(\Sigma)+1$. 

Let $\square$ be a new symbol which does not yet occur in $\Sigma\cup X$. A $\Sigma$-\emph{context} is a term $t\in T(\Sigma, X\cup \{ \square\})$ and can be seen as a term with ``holes'', represented by $\square$, in it. Contexts are denoted by $C$. If $\{p_1,\ldots,p_n\}=\{p\in\mathcal{P}os(C) \,|\, C|_p = \square\}$, where $p_i$ is to the left of $p_{i+1}$ in the tree representation of $C$, then $C[T_1\ldots, T_n]:= C[T_1]_{p_1}\ldots [T_n]_{p_n}$. In what follows a context formed using only function symbols in $\Sigma_{E}$  will be called an $E$-\emph{context} to emphasize  the equational theory $E$.

 A term $M$ is said to be an $E$-\emph{alien} if $M$ is headed by a symbol $f\notin \Sigma_{E}$ or a private name/constant. We write $M==N$ to denote syntactic equality of ground terms.

In the rest of the paper, we use signatures, terms and equational theories to model protocols.  \emph{Messages} exchanged between participants of a protocol during its execution are represented by terms. Equational theories and rewriting systems are used to model the cryptographic primitives in the protocol and the algebraic capabilities of an intruder. 

\section{Deduction Problem}\label{sec:locallystable}
Given  a set $\Gamma$ that represents the information available to an attacker, we may ask whether a given ground term $M$ may be deduced from $\Gamma$ using equational reasoning. This relation is written $\Gamma \vdash M$ and axiomatised in a natural deduction like system of inference rules.

\begin{table}[ht]
\caption{System $\mathcal{N}$: a natural deduction system for intruder equational deduction}
\label{equationalreasoning}
{\small
\hrule
\begin{prooftree}
\AxiomC{$M \in \Gamma$}
\RightLabel{$(id)$}
\UnaryInfC{$\Gamma \vdash M$}
\DisplayProof
\defaultHypSeparation
\AxiomC{$\Gamma \vdash M_1 \,\,\,\ldots $}
\AxiomC{$\Gamma \vdash M_n$}
\RightLabel{$(f_I)\,f\in \Sigma_E$}
\BinaryInfC{$\Gamma \vdash f(M_1, \ldots, M_n)$}
\DisplayProof
\defaultHypSeparation
\AxiomC{$\Gamma \vdash N$}
\RightLabel{$(\approx) \,M\approx_{E} N$}
\UnaryInfC{$\Gamma \vdash M$}
\end{prooftree}
\hrule
}
\end{table}

\subsection{Locally Stable Theories}
Let $\oplus$ be an arbitrary function symbol in $\Sigma_E$ for an equational theory $E$. We write $\alpha \cdot_{\oplus} M$ for the term $M\oplus \ldots \oplus M$, $\alpha$ times ($\alpha \in \mathbb{N}$). Given a set  $S$ of terms, we write $sum_{\oplus}(S)$ for the set of arbitrary sums of terms in $S$, closed modulo $AC$:
\begin{equation*}
sum_{\oplus}(S)=\{(\alpha_1 \cdot_{\oplus}T_1)\oplus \ldots\oplus(\alpha_n \cdot_{\oplus}T_n)\,|\, \alpha_i \geq 0, 
T_i\in S\}
\end{equation*}
Define  $sum(S)= \bigcup_{i=1}^k sum_{\oplus_i}(S)$, where $\oplus_1,\ldots, \oplus_k$ are the AC-symbols of the theory.

For a rule $l\rightarrow r\in \mathcal{R}$ and a substitution $\theta$ such that
\begin{itemize}
\item either there exists a term $s_1$ such that $s=_{AC}s_1$, $s_1=_{AC}l\theta$ and $t=r\theta$;
\item or there exist terms $s_1$ and $s_2$ such that $s=_{AC}s_1 \oplus s_2$, $s_1=_{AC}l\theta$ and $t=_{AC}r\theta \oplus s_2$.
\end{itemize}
we write $s\stackrel{h}{\rightarrow}t$ and say that the reduction occurs in the head.

As in ~\cite{AbadiCo2006} we associate with each set $\Gamma$ of messages, a set of subterms in $\Gamma$ that may be deduced from $\Gamma$ by applying only ``small'' contexts.  The concept of small is arbitrary --- in the definition below, we have bound the size of an $E$-context $C$ by $c_E$ and the size of $C'$ by $c_E^2$, but other bounds may be suitable. Notice that limiting the size of an  $E$-context by $c_E$ makes the context big enough to be an instance of  any of the rules in the TRS $\mathcal{R}$ associated to $E$.

\begin{definition}[Locally Stable]\label{locallystable}
An AC-convergent equational theory $E$ is \emph{locally stable} if, for every finite set $\Gamma=\{M_1, \ldots,M_n\}$, where the terms $M_i$ are ground and in normal form, there exists a finite and computable set $sat(\Gamma)$, closed modulo $AC$, such that
\begin{enumerate}
\item $M_1, \ldots, M_n \in sat(\Gamma)$;\label{rule1}
\item if $M_1,\ldots,M_k \in sat(\Gamma)$ and $f(M_1,\ldots,M_k)\in st(sat(\Gamma))$ then $f(M_1,\ldots,M_k)\in sat(\Gamma)$, for  $f\in \Sigma_E$;\label{rule2}
\item if $C[S_1, \ldots,S_l]\stackrel{h}{\rightarrow}M$, where $C$ is an $E$-context such that $|C|\leq c_{E}$, and  $S_1, \ldots, S_l \in$ $sum_{\oplus}(sat(\Gamma))$, for some $AC$ symbol $\oplus$, then there exist an $ E$-context $C'$, a term $M'$, and terms $S_1', \ldots, S_k' \in sum_{\oplus}(sat(\Gamma))$, such that $|C'|\leq c_{E}^2$, and $M\stackrel{*}{\rightarrow}_{\mathcal{R}\cup AC}M'=_{AC}C'[S_1', \ldots, S_k']$;\label{rule3}
\item if $M\in sat(\Gamma)$ then $M\downarrow\in sat(\Gamma)$.\label{rule4}
\item if $M\in sat(\Gamma)$ then $\Gamma \vdash M$.\label{rule5}
\end{enumerate}
\end{definition}

Notice that the  set $sat(\Gamma)$ may not be unique. Any set $sat(\Gamma)$  satisfying the five conditions is adequate for the results.
\begin{remark}
The addition of rule 4 in the Definition \ref{locallystable} is necessary to prove case 1b of Lemma \ref{lemma:epcloseness}, where the rewriting reduction occurs in a term $M_i\in sat(\Gamma)$ in a position different from the ``head''.  Normal forms are strictly necessary in the set $sat(\Gamma)$, they are essential to lift the applications of rewriting rules in the head of ``small'' contexts to applications of rewriting rules in arbitrary positions of ``small'' contexts. With this  additional condition, Lemma 11 in \cite{AbadiCo2006} can also be proved. This fact was confirmed via personal communication with the second author of \cite{AbadiCo2006}. 
\end{remark}

The lemma and the corollary below, adapted from~\cite{AbadiCo2006}, are used in the proof of Theorem~\ref{theorem:epdecidability}. 

\begin{lemma}\label{lemma:epcloseness}
Let $E$ be a locally stable theory and $\Gamma=\{M_1,\ldots,M_n\}$ a set of  ground terms  in normal form. For every $E$-context $C_1$, for every $M_i \in sat(\Gamma)$, for every term $T$ such that $C_1[M_1,\ldots,M_k]\rightarrow_{\mathcal{R}\cup AC} T$, there exist an $E$-context $C_2$, and terms $M_i' \in sat(\Gamma)$, such that $T \stackrel{*}{\rightarrow}_{\mathcal{R}\cup AC} C_2[M_1', \ldots, M_l']$.
\end{lemma}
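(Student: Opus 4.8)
The plan is to argue by induction on $|C_1|$ (abbreviate the tuple $M_1,\ldots,M_k$ by $\vec M$), reducing every configuration to the single critical case in which the rewrite step acts at the root of $C_1[\vec M]$; that case is then discharged with the closure conditions~\ref{rule3} and~\ref{rule4} of Definition~\ref{locallystable}. As in~\cite{AbadiCo2006}, everything is read modulo $AC$.

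For the base case $C_1=\square$ we have $C_1[M_1]=M_1\in sat(\Gamma)$ and $M_1\rightarrow_{\mathcal{R}\cup AC}T$; since $\mathcal{R}$ is AC-convergent, $T\stackrel{*}{\rightarrow}_{\mathcal{R}\cup AC}M_1{\downarrow}$, and $M_1{\downarrow}\in sat(\Gamma)$ by condition~\ref{rule4}, so $C_2=\square$ and $M_1'=M_1{\downarrow}$ work --- this subcase, a reduction inside some $M_i$ possibly away from its head, is exactly where condition~\ref{rule4} is indispensable, as flagged in the remark after Definition~\ref{locallystable}. If $C_1$ has no holes the claim is trivial, $T$ being again a ground $\Sigma_E$-term. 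For the inductive step with the redex strictly below the root, write $C_1=f(D_1,\ldots,D_m)$, $f\in\Sigma_E$ (when $f$ is $AC$, read this on flattened terms, the redex then lying strictly inside one argument below the top $f$-layer): then $D_j[\vec M_j]\rightarrow_{\mathcal{R}\cup AC}T'$ for the corresponding sub-tuple of $sat(\Gamma)$-terms, the induction hypothesis gives $T'\stackrel{*}{\rightarrow}_{\mathcal{R}\cup AC}C_2'[\vec M_j']$ with $\vec M_j'$ in $sat(\Gamma)$, and since each remaining argument $D_i[\vec M_i]$ is already an $E$-context over $sat(\Gamma)$-terms, reducing the $j$-th argument inside $T$ yields $T\stackrel{*}{\rightarrow}_{\mathcal{R}\cup AC}C_2[\ldots]$ with $C_2=f(D_1,\ldots,C_2',\ldots,D_m)$ and the concatenated tuple.

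The crux is the inductive step when the redex is at the root, i.e.\ $C_1[\vec M]\stackrel{h}{\rightarrow}T$ via a rule $l\rightarrow r\in\mathcal{R}$ and a substitution $\theta$, in one of the two forms in the definition of $\stackrel{h}{\rightarrow}$. Since $|l|\le c_E$, only the topmost part of $C_1[\vec M]$ is involved in the match: below the non-variable skeleton of $l$ sit the instances $\theta(x)$, each a subterm of $C_1[\vec M]$, hence either a subterm $C_1|_q[\vec M_q]$ of the context part or a proper subterm of some $M_i\in sat(\Gamma)$. I would build a ``small'' $E$-context $C$ with $|C|\le c_E$ out of $l$'s skeleton --- regrouping the top $\oplus$-layer, and including the constituents of the residual summand $s_2$ when $\stackrel{h}{\rightarrow}$ is applied in its sum form --- whose holes are filled by terms $S_1,\ldots,S_l$ that I then check lie in $sum_{\oplus}(sat(\Gamma))$: single terms $M_i\in sat(\Gamma)$ and pure $\oplus$-combinations of $sat(\Gamma)$-terms qualify directly, and the residual summand $s_2$ is, by the same flattening, such an $\oplus$-combination. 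Condition~\ref{rule3} then gives an $E$-context $C'$ with $|C'|\le c_E^2$, a term $M'$, and $S_1',\ldots,S_k'\in sum_{\oplus}(sat(\Gamma))$ with $T\stackrel{*}{\rightarrow}_{\mathcal{R}\cup AC}M'=_{AC}C'[S_1',\ldots,S_k']$; expanding each $S_j'=(\alpha_1\cdot_{\oplus}T_1)\oplus\cdots\oplus(\alpha_p\cdot_{\oplus}T_p)$ with $T_i\in sat(\Gamma)$ displays it as an $\oplus$-context over $sat(\Gamma)$-terms, and absorbing these $\oplus$-contexts --- together with the context material of $C_1$ left below $C$ --- into $C'$ produces the required $C_2$ and tuple. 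When a step instead ``digs out'' a subterm $A$ of some $M_i\in sat(\Gamma)$, I would apply condition~\ref{rule3} with the projecting context sitting on top of $M_i$ and $S_1=M_i\in sat(\Gamma)$ to conclude $A{\downarrow}\in sat(\Gamma)$, so that $A{\downarrow}$ may serve as a hole-filler; if the rewrite system cannot expose $A$, it never surfaces.

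The step I expect to be the main obstacle is precisely this root case, and within it the construction of the small $E$-context $C$ of size $\le c_E$ together with the verification that the terms under its holes lie in $sum_{\oplus}(sat(\Gamma))$ --- this is where $AC$-matching interferes with the decomposition: a variable of $l$ may absorb an arbitrary $\oplus$-sum, the residual summand of the sum form must be recognised as such a sum, and the frontier between the context part of $C_1$ and the terms $M_i$ must be tracked carefully. The two pieces of bookkeeping that make the argument go through are using the normal-form closure~\ref{rule4} so that the leaves ultimately used really belong to $sat(\Gamma)$ (and not merely to $st(sat(\Gamma))$), and a second appeal to condition~\ref{rule3} certifying that any $E$-alien the rewriting can expose is already in $sat(\Gamma)$; the rest is routine bookkeeping.
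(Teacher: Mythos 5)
Your induction on $|C_1|$ correctly disposes of every case the paper's printed proof actually carries out: when the redex lies inside some $M_i$ you normalise $M_i$ and invoke condition~\ref{rule4}, which is exactly the paper's case 1(b) and in fact subsumes its case 1(a) (the paper treats head-reductions inside $M_i$ separately via condition~\ref{rule3}, but condition~\ref{rule4} alone suffices there too). The propagation of the result through the surrounding context is routine, and the structural difference --- your induction on the context versus the paper's direct case split on the redex position --- is cosmetic.

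The genuine gap sits exactly where you say it does, in the root case, and your sketch does not close it. You build a context $C$ with $|C|\le c_E$ from the skeleton of $l$ and must exhibit $C_1[M_1,\ldots,M_k]=_{AC}C[S_1,\ldots,S_l]$ with every $S_i\in sum_{\oplus}(sat(\Gamma))$; but the hole-fillers are the instances $\theta(x)$ of the variables of $l$ (plus the residual summand $s_2$ in the sum form of $\stackrel{h}{\rightarrow}$), and when the position of $x$ falls inside the context part of $C_1$, $\theta(x)$ has the form $C_1|_q[M_{q_1},\ldots,M_{q_j}]$ --- for instance $g(M_1,M_2)$ for a non-AC symbol $g\in\Sigma_E$ with $g(M_1,M_2)\notin sat(\Gamma)$. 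Such a term is neither a single element of $sat(\Gamma)$ nor a pure $\oplus$-combination of elements of $sat(\Gamma)$, so it does not lie in $sum_{\oplus}(sat(\Gamma))$ and condition~\ref{rule3} cannot be applied as you propose; the same objection applies to the summands of $s_2$. The missing idea is to abstract these context subterms by fresh names before performing the head step and to re-substitute them afterwards, which in turn needs an auxiliary argument that the conclusion of condition~\ref{rule3} is stable under this abstraction. Your closing move --- invoking condition~\ref{rule3} to conclude $A{\downarrow}\in sat(\Gamma)$ for a ``dug out'' subterm $A$ --- also misreads that condition: it delivers a representation $M'=_{AC}C'[S_1',\ldots,S_k']$, never membership of a term in $sat(\Gamma)$ (only conditions~\ref{rule2}, \ref{rule4} and the inverse-closure do that). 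In fairness, this is precisely the case the paper itself declares ``very technical'' and defers to the extended version, so your proof covers what the printed proof covers and stalls where the printed proof stalls; but as a self-contained argument it is incomplete at this point.
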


\begin{proof}

Suppose that $C_1[M_1,\ldots,M_k]\rightarrow_{AC}T$, for an $E$-context $C_1$ and $M_i\in\,sat(\Gamma)$. The proof is divided in two cases:
\begin{enumerate}
\item The reduction happens inside one of the terms $M_i$:
\begin{enumerate}
  \item if $M_i\stackrel{h}{\rightarrow}M_i'$ then by definition of $sat(\Gamma)$ (since $E$ is locally stable), there exist an $E$-context $C$ such that $|C|\leq c_E^2$ and $M_i'\stackrel{*}{\rightarrow}C[S_1,\ldots,S_l]$ where $S_j\in sum_{\oplus}(sat(\Gamma))$.
  
 Each $S_j\in sum_{\oplus}(sat(\Gamma))$  is of the form 
  $S_j=(\alpha_1\cdot_{\oplus}M_{j_1})\oplus \ldots\oplus (\alpha_n \cdot_{\oplus}M_{j_n}),$ 
  for $M_{j_k}\in sat(\Gamma)$. That is, $S_j=C_j[M_{j_1},\ldots, M_{j_k}]$, for $1\leq j\leq l$.
  Therefore,
  \begin{equation}
  \begin{split}
  C_1[M_1,\ldots, M_i, \ldots,M_k]\stackrel{h}{\rightarrow}C_1[M_1,\ldots, M'_i, \ldots,M_k]&\stackrel{*}{\rightarrow}_{AC}C_1[M_1,\ldots, C[S_1,\ldots, S_l], \ldots,M_k]\\
  &=_{AC}C_2[M_1^{''},\ldots, M_s^{''}],
  \end{split}
  \end{equation}
  where $M_t^{''} \in sat(\Gamma)$, for $1\leq t\leq s$.
   \item if $M_i\rightarrow_{AC}M_i'$ in a position different from ``head'', then \label{case:correction}
$$  C_1[M_1,\ldots, M_i, \ldots,M_k]\rightarrow C_1[M_1,\ldots, M'_i, \ldots,M_k]\stackrel{*}{\rightarrow}_{AC}C_1[M_1,\ldots, M_i\downarrow, \ldots,M_k].$$
 By case 4 in Definition \ref{locallystable}, $M_i\downarrow \in sat(\Gamma)$.
\end{enumerate}

\item The case where the reduction does not occur inside the terms $M_i$: this case if very technical and will be omitted here.  The complete proof can be found in the extended version of this paper.
\end{enumerate}

\end{proof}

As a consequence we obtain the following Corollary:

\begin{corollary}[\cite{AbadiCo2006}]\label{corollary:epcloseness}
Let $E$ be a locally stable theory. Let $\Gamma=\{M_1,\ldots,M_n\}$ be a set of ground terms  in normal form. For every $E$-context $C_1$, for every $M'_i\in sat(\Gamma)$, for every $T$ in normal form such that $C_1[M'_1,\ldots,M'_k]\stackrel{*}{\rightarrow}_{\mathcal{R}\cup AC}T$, there exist an $E$-context $C_2$ and terms $M_j{''}\in sat(\Gamma)$ such that $T=_{AC}C_2[M^{''}_1,\ldots,M_l^{''}]$.
\end{corollary}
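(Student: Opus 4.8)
The plan is to obtain Corollary~\ref{corollary:epcloseness} from Lemma~\ref{lemma:epcloseness} by a single well-founded induction, exploiting that $E$ is AC-convergent. First I would reformulate the statement. Since $\mathcal{R}$ is AC-convergent, each term has a normal form that is unique modulo $=_{AC}$; write $t\!\downarrow$ for it. If $C_1[M_1',\ldots,M_k']\stackrel{*}{\rightarrow}_{\mathcal{R}\cup AC}T$ with $T$ in normal form, then $T=_{AC}(C_1[M_1',\ldots,M_k'])\!\downarrow$. Hence the corollary is equivalent to the following claim, which speaks only about terms: for every $E$-context $C_1$ and all $M_i'\in sat(\Gamma)$ there exist an $E$-context $C_2$ and terms $M_j''\in sat(\Gamma)$ with $(C_1[M_1',\ldots,M_k'])\!\downarrow\,=_{AC}C_2[M_1'',\ldots,M_l'']$.

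I would prove this claim by well-founded induction on the term $t:=C_1[M_1',\ldots,M_k']$ ordered by $\stackrel{+}{\rightarrow}_{\mathcal{R}\cup AC}$, which is well founded because $\mathcal{R}$ is AC-terminating (an infinite $\stackrel{+}{\rightarrow}_{\mathcal{R}\cup AC}$-descending chain would give an infinite $\mathcal{R}\cup AC$-rewrite sequence). If $t$ is already in normal form, take $C_2:=C_1$ and $M_j'':=M_j'$. Otherwise choose one step $t\rightarrow_{\mathcal{R}\cup AC}T_1$; Lemma~\ref{lemma:epcloseness} supplies an $E$-context $C'$ and terms $N_i\in sat(\Gamma)$ with $T_1\stackrel{*}{\rightarrow}_{\mathcal{R}\cup AC}C'[N_1,\ldots,N_m]$, hence $t\stackrel{+}{\rightarrow}_{\mathcal{R}\cup AC}C'[N_1,\ldots,N_m]$, so $C'[N_1,\ldots,N_m]$ is strictly below $t$. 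The term $C'[N_1,\ldots,N_m]$ is itself an $E$-context filled with elements of $sat(\Gamma)$, so the induction hypothesis yields an $E$-context $C_2$ and $M_j''\in sat(\Gamma)$ with $(C'[N_1,\ldots,N_m])\!\downarrow\,=_{AC}C_2[M_1'',\ldots,M_l'']$. Since $t\stackrel{*}{\rightarrow}_{\mathcal{R}\cup AC}C'[N_1,\ldots,N_m]$, AC-confluence gives $(C'[N_1,\ldots,N_m])\!\downarrow\,=_{AC}t\!\downarrow\,=_{AC}T$, which closes the induction.

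The only delicate point is the choice of induction. Iterating Lemma~\ref{lemma:epcloseness} along the given derivation $C_1[M_1',\ldots,M_k']\stackrel{*}{\rightarrow}_{\mathcal{R}\cup AC}T$ does not in general shorten it, so an induction on the length of that derivation does not go through; the induction must instead be on $\stackrel{+}{\rightarrow}_{\mathcal{R}\cup AC}$ itself, using AC-termination for well-foundedness and AC-confluence to identify $T$ with the common normal form. Once this is in place, the argument is just bookkeeping: the single $\mathcal{R}\cup AC$-step preceding the (possibly empty) reduction provided by the lemma is exactly what guarantees a strict decrease, and the contexts compose without trouble.
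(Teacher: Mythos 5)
Your proof is correct and is essentially the argument the paper omits by deferring to Abadi and Cortier: Noetherian induction on the AC-terminating relation $\stackrel{+}{\rightarrow}_{\mathcal{R}\cup AC}$, using Lemma~\ref{lemma:epcloseness} for the single step and AC-confluence to identify $T$ with the unique normal form modulo $AC$ of $C_1[M'_1,\ldots,M'_k]$. Your remark that induction on the length of the given derivation would not go through (because the lemma may replace one step by a longer reduction) correctly identifies the one point where a naive iteration fails.
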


\begin{proof}
The proof is the same as in \cite{AbadiCo2006}.
\end{proof}

In the following we show that any term $M$  deducible from $\Gamma$ is equal modulo AC to an $E$-context over terms in $sat(\Gamma)$.

\begin{lemma}[\cite{AbadiCo2006}]\label{lemma:deductioncontext}
Let $E$ be a locally stable theory. Let $\Gamma =\{M_1,\ldots,M_n\}$ be a finite set of ground terms in normal form, and $M$ be a ground term in normal form. Then $\Gamma \vdash M$ if and only if there exist an $E$-context $C$ and terms $M'_1, \ldots,M'_k\in sat(\Gamma)$ such that $M=_{AC}C[M^{'}_1,\ldots,M^{'}_n]$.
\end{lemma}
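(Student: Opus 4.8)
The plan is to establish the two implications separately: the ``if'' direction by a structural induction that builds a derivation in $\mathcal{N}$, and the ``only if'' direction by an induction on derivations that relies on Corollary~\ref{corollary:epcloseness}.

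\textbf{The ``if'' direction.} Suppose $M =_{AC} C[M'_1,\ldots,M'_k]$ with $C$ an $E$-context and each $M'_i \in sat(\Gamma)$. By condition~\ref{rule5} of Definition~\ref{locallystable}, $\Gamma \vdash M'_i$ for every $i$. I would then prove, by structural induction on $C$, that $\Gamma \vdash C[M'_1,\ldots,M'_k]$: for $C = \square$ this is $\Gamma \vdash M'_1$, already known; for $C = f(C_1,\ldots,C_m)$ with $f \in \Sigma_E$, one splits $M'_1,\ldots,M'_k$ into the sublists feeding each $C_j$, applies the induction hypothesis to obtain $\Gamma \vdash C_j[\ \cdots\ ]$, and concludes with one instance of $(f_I)$ (admissible precisely because $f\in\Sigma_E$), the case of a non-hole constant leaf of $\Sigma_E$ being the instance of $(f_I)$ with no premises. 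Finally, since the AC symbols lie in $\Sigma_E$, $=_{AC}$ is contained in $\approx_E$; hence $M \approx_E C[M'_1,\ldots,M'_k]$ and rule $(\approx)$ gives $\Gamma \vdash M$.

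\textbf{The ``only if'' direction.} I would prove, by induction on the derivation of $\Gamma \vdash M$ and for an \emph{arbitrary} ground term $M$ (not assumed normal), the stronger statement: there exist an $E$-context $C$ and terms $M'_i \in sat(\Gamma)$ with $M\!\downarrow\ =_{AC}\ C[M'_1,\ldots,M'_k]$, where $M\!\downarrow$ denotes the AC-unique normal form of $M$. The strengthening is needed because the premises of an $(f_I)$ step need not be in normal form even when its conclusion is. For $(id)$: $M\in\Gamma$ is normal and, by condition~\ref{rule1}, lies in $sat(\Gamma)$, so take $C=\square$. For $(\approx)$: from $\Gamma\vdash N$ with $M\approx_E N$, AC-convergence gives $M\!\downarrow\ =_{AC}\ N\!\downarrow$, and the induction hypothesis for $N$ finishes it. For $(f_I)$, with $M = f(M_1,\ldots,M_n)$ and $f\in\Sigma_E$: the induction hypotheses supply $E$-contexts $C_i$ over $sat(\Gamma)$-terms with $M_i\!\downarrow\ =_{AC}\ C_i[\ \cdots\ ]$; then $C' := f(C_1,\ldots,C_n)$ is an $E$-context, the term it fills to is $=_{AC} f(M_1\!\downarrow,\ldots,M_n\!\downarrow)$, which (being a reduct of $M$) has normal form $M\!\downarrow$ by AC-confluence, so $C'[\ \cdots\ ]\stackrel{*}{\rightarrow}_{\mathcal{R}\cup AC} M\!\downarrow$ with $M\!\downarrow$ in normal form; Corollary~\ref{corollary:epcloseness} then yields an $E$-context $C$ over $sat(\Gamma)$ with $M\!\downarrow\ =_{AC}\ C[\ \cdots\ ]$. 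Specialising to an $M$ already in normal form (so $M\!\downarrow\ =_{AC}\ M$) gives the lemma.

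\textbf{Expected main obstacle.} No new idea beyond Corollary~\ref{corollary:epcloseness} and Definition~\ref{locallystable} is required; the delicate part is the $(f_I)$ case of the converse — handling non-normal intermediate terms of the deduction, checking that $f(C_1,\ldots,C_n)$ really is an $E$-context and that it $\mathcal{R}\cup AC$-reduces to the normal form of $M$, and verifying that all hypotheses of Corollary~\ref{corollary:epcloseness} (first argument an $E$-context, the filled-in terms in $sat(\Gamma)$, the target $T$ in normal form) hold before applying it. This is exactly the bookkeeping for which the normal-form closure of $sat(\Gamma)$ (condition~\ref{rule4}), through Corollary~\ref{corollary:epcloseness}, is needed.
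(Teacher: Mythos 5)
Your proposal is correct and takes essentially the route the paper itself defers to (the proof in Abadi--Cortier): the ``if'' direction by assembling a derivation from condition~\ref{rule5} of Definition~\ref{locallystable} with the $(f_I)$ and $(\approx)$ rules, and the ``only if'' direction by induction on the derivation, suitably strengthened to non-normal intermediate terms and closed under $(f_I)$ via Corollary~\ref{corollary:epcloseness}. The one point worth stating explicitly is that the final $(\approx)$ step of the ``if'' direction uses $=_{AC}\ \subseteq\ \approx_E$, which holds here because the AC axioms of the AC symbols of $\Sigma_E$ are part of the theory $E$.
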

\begin{proof} 
The proof is the same as in \cite{AbadiCo2006}.
\end{proof}

As a consequence of the previous results decidability of IDP for locally stable theories is obtained:

\begin{theorem}\label{theo:IDPdecidable}
The Intruder Deduction Problem is decidable for locally stable theories.
\end{theorem}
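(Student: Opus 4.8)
The plan is to reduce the Intruder Deduction Problem to a decidable matching/membership question by exploiting Lemma~\ref{lemma:deductioncontext}. Given a locally stable theory $E$, an input consists of a finite set $\Gamma = \{M_1,\ldots,M_n\}$ of ground messages and a ground target message $M$. First I would normalise: replace each $M_i$ by its normal form $M_i\!\downarrow$ with respect to $\mathcal{R}\cup AC$, and replace $M$ by $M\!\downarrow$. This is effective since $E$ is AC-convergent, and it does not change deducibility, because the rule $(\approx)$ in system $\mathcal{N}$ lets us move freely between $\approx_E$-equivalent terms on both sides. After this step we are in the situation required by Lemma~\ref{lemma:deductioncontext}: $\Gamma$ consists of ground terms in normal form and $M$ is a ground term in normal form.

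Next I would invoke the hypothesis that $E$ is locally stable to compute the set $sat(\Gamma)$; by Definition~\ref{locallystable} such a set exists, is finite, and is computable, so this is an effective step. By Lemma~\ref{lemma:deductioncontext}, $\Gamma \vdash M$ holds if and only if there exist an $E$-context $C$ and terms $M'_1,\ldots,M'_k \in sat(\Gamma)$ with $M =_{AC} C[M'_1,\ldots,M'_k]$. So it suffices to decide this last condition. The key observation is that the context $C$ need not be searched over the infinite set of all $E$-contexts: since $M$ is a fixed ground term and $=_{AC}$ is decidable, any witnessing decomposition $M =_{AC} C[M'_1,\ldots,M'_k]$ corresponds to choosing, for each maximal subterm of (an $AC$-variant of) $M$ that lies in $st(sat(\Gamma))$, either to keep building the context with a symbol of $\Sigma_E$ or to "plug in" an element of $sat(\Gamma)$. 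Concretely, I would decide the predicate ``$M$ is an $E$-context instance over $sat(\Gamma)$'' by a recursive procedure on the structure of $M$: $M$ qualifies if $M \in sat(\Gamma)$, or if $M = f(N_1,\ldots,N_j)$ with $f \in \Sigma_E$ and each $N_i$ recursively qualifies, with the $AC$ function symbols handled by the restricted higher-order AC-matching machinery described in the introduction (combining DO-ACM with linear Diophantine solving). Since $M$ has finitely many subterms and finitely many $AC$-rearrangements to consider, and $sat(\Gamma)$ is finite and closed modulo $AC$, this recursion terminates and decides the condition.

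The main obstacle is the $AC$ case of the matching test: when $M$ (or one of its subterms after $AC$-flattening) is headed by an $AC$ symbol $\oplus$, one must decide whether the multiset of $\oplus$-summands of $M$ can be partitioned into groups, each group being $=_{AC}$-equal to $C_i[\,\ldots\,]$ for subcontexts $C_i$ over $sat(\Gamma)$; naively this is an exponential search over partitions. This is exactly where the paper's contribution enters: elements of $sum_{\oplus}(sat(\Gamma))$ are of the form $(\alpha_1\cdot_{\oplus}T_1)\oplus\cdots\oplus(\alpha_m\cdot_{\oplus}T_m)$ with $\alpha_i \in \mathbb{N}$, so deciding whether the $\oplus$-summands of $M$ are covered reduces to solving a system of linear Diophantine equations whose unknowns are the multiplicities $\alpha_i$ — solvable efficiently (in $\mathbb{Z}$, using the presence of inverses, or in $\mathbb{N}$ otherwise) by the cited SLDE algorithms, combined with DO-ACM for the non-repeated part. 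For the present theorem only decidability is needed, so any terminating SLDE solver suffices; the polynomial bound is the refinement discussed separately. Correctness of the whole procedure follows from Lemma~\ref{lemma:deductioncontext} together with the soundness of the recursive context-matching test, and termination follows from finiteness of $sat(\Gamma)$ and of the subterm/AC-variant search space.
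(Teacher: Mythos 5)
Your proposal is correct and follows essentially the same route as the paper: normalise, compute the finite computable set $sat(\Gamma)$, and apply Lemma~\ref{lemma:deductioncontext} to reduce $\Gamma\vdash M$ to the decidable question of whether $M=_{AC}C[M'_1,\ldots,M'_k]$ for some $E$-context $C$ and $M'_i\in sat(\Gamma)$. The paper leaves the last matching step implicit for plain decidability (reserving the DO-ACM/SLDE machinery for the polynomial bound in the case with inverses), whereas you spell it out, correctly noting that a terminating SLDE solver over $\mathbb{N}$ suffices here.
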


In the next section we will provide a complexity bound for the decidability of the intruder deduction problem for a restricted case of locally stable theories.

\section{Locally Stable Theories with Inverses}

In order to obtain the polynomial complexity bound of our decidability algorithm we will need to consider the existence of inverses for each $AC$ symbol in the signature of our equational theory. Our algorithm will rely on solving systems of linear Diophantine equations over $\mathbb{Z}$ and the inverses will be interpreted as \emph{negative integers}.
 
(*) \emph{In the following results, let $E$ be a locally stable theory whose signature $\Sigma_E$ contains,  for each  $AC$ function symbol $\oplus$, its corresponding \emph{inverse} $i_{\oplus}$.}

That is, the following results are related to equational theories $E$ containing the following equation:
\begin{equation}
x\oplus i_{\oplus} (x) = e_{\oplus}
\end{equation}
for each AC-symbol $\oplus$ in $\Sigma_E$, where $i_{\oplus}$ is the unary function symbol representing the inverse of $\oplus$ and $e_{\oplus}$ is the corresponding neutral element.

\begin{definition}[Locally Stable with Inverses]\label{def:locallystableinverses}
An AC-convergent equational theory $E$  satisfying (*) is \emph{locally stable} if, for every finite set $\Gamma=\{M_1, \ldots,M_n\}$, where the terms $M_i$ are ground and in normal form, there exists a finite and computable set $sat(\Gamma)$, closed modulo $AC$, such that
\begin{enumerate}
\item $M_1, \ldots, M_n \in sat(\Gamma)$, $e_{\oplus} \in sat(\Gamma)$ for each $\oplus \in \Sigma_{E}$;
\item if $M_1,\ldots,M_k \in sat(\Gamma)$ and $f(M_1,\ldots,M_k)\in st(sat(\Gamma))$ then $f(M_1,\ldots,M_k)\in sat(\Gamma)$, for $f\in \Sigma_E$;
\item if $C[S_1, \ldots,S_l]\stackrel{h}{\rightarrow}M$, where $C$ is an $E$-context such that $|C|\leq c_{E}$, and  $S_1, \ldots, S_l \in sum_{\oplus}(sat(\Gamma))$, for some $AC$ symbol $\oplus$, then there exist an $ E$-context $C'$, a term $M'$, and terms $S_1', \ldots, S_k' \in sum_{\oplus}(sat(\Gamma))$, such that $|C'|\leq c_{E}^2$, and $M\stackrel{*}{\rightarrow}_{\mathcal{R}\cup AC}M'=_{AC}C'[S_1', \ldots, S_k']$;
\item if $M\in sat(\Gamma)$ then $M\downarrow\in sat(\Gamma)$.
\item if $M\in sat(\Gamma)$ then $i_{\oplus}(M)\downarrow\in sat(\Gamma)$ for each AC symbol $\oplus$ in $E$.
\item if $M\in sat(\Gamma)$ then $\Gamma \vdash M$.
\end{enumerate}
\end{definition}

Based on a well-founded ordering over the symbols in the language,  we prove that a restricted case of higher-order AC-matching (``is there an $E$-context $C$ such that $M=_{AC}C[M_1,\ldots,M_k]$ for some $M_1,\ldots,M_k\in sat(\Gamma)$?'')  can be solved in polynomial time in $|sat(\Gamma)|$ and $|M|$. This AC-matching problem is solved using the DO-ACM (Distinct-Occurrences of AC-matching)~\cite{narendran}, where every variable in the term being matched occurs only once. In addition, we also use a standard and polynomial time algorithm for solving SLDE over $\mathbb{Z}$~\cite{BouConDe,ClauFor,frumkin,papadimitriou}.

To facilitate the description of the algorithm below we have considered only one AC-symbol $\oplus$ whose corresponding inverse will be denoted by $i$. The proof can be extended similarly for theories with multiple AC-symbols each one with its corresponding inverse.

\begin{lemma}\label{lemma:acmatching}
Let $E$ be a locally stable theory satisfying (*), $\Gamma=\{M_1,\ldots,M_n\}$ a finite set of ground messages in normal form and $M$ a  ground term in normal form. Then the question of whether there exists an $E$-context $C$ and $T_1,\ldots,T_k\in sat(\Gamma)$  such that   $M=_{AC}C[T_1,\ldots,T_k]$ is decidable in polynomial time in $|M|$ and $|sat(\Gamma)|$.
\end{lemma}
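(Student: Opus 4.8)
The plan is to proceed by structural induction on $M$, following the well-founded ordering on symbols hinted at in the text, and to reduce the "root-level AC" part of the problem to a system of linear Diophantine equations over $\mathbb{Z}$, handing off all non-AC structure to the DO-ACM algorithm. First I would set up the recursion: to decide whether $M =_{AC} C[T_1,\dots,T_k]$ for some $E$-context $C$ and $T_i \in sat(\Gamma)$, I distinguish according to the head symbol of $M$. If $M$ is an $E$-alien (headed by a private name or a symbol outside $\Sigma_E$), then the only possible $E$-context is the hole, so the question reduces to testing $M \in sat(\Gamma)$ (up to $=_{AC}$), which is decidable since $sat(\Gamma)$ is finite and computable. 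If $M = f(M_1,\dots,M_m)$ with $f \in \Sigma_E$ non-AC, then either $C$ is the hole (again test membership in $sat(\Gamma)$ modulo $AC$) or $C = f(C_1,\dots,C_m)$, and the problem splits into $m$ independent subproblems "$M_j =_{AC} C_j[\dots]$ over $sat(\Gamma)$", each on a strictly smaller term; here I would be a little careful that the $T_i$'s used across the $C_j$ need not be disjoint, but since we only ask for existence, the subproblems are genuinely independent and recursion applies.

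The interesting case is $M$ headed by the AC symbol $\oplus$. Flatten $M$ modulo AC into a multiset $\{N_1,\dots,N_p\}$ of $\oplus$-alien summands (terms whose head is not $\oplus$), possibly including summands headed by the inverse $i$. The candidate context $C$, restricted to its maximal $\oplus$-layer at the root, is an $\oplus$-sum whose leaves are either holes (to be filled by elements of $sat(\Gamma)$, themselves possibly $\oplus$-sums) or again non-$\oplus$ $E$-contexts whose holes are filled recursively. The key observation is that, working modulo AC and using the inverse equation $x \oplus i(x) = e_\oplus$, equality $M =_{AC} C[T_1,\dots,T_k]$ at the $\oplus$-layer becomes a balance condition: writing each relevant subterm coefficient as an integer (positive for $T$, negative for $i(T)$), we must have, for each distinct $\oplus$-alien "atom" $A$ occurring in $M$ or in $sat(\Gamma)$, an equation $\sum_j a_j x_j = c_A$ where $c_A$ is the signed multiplicity of $A$ in $M$ and the $x_j$ range over how many copies of each summand pattern the context contributes. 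This is a system of linear Diophantine equations over $\mathbb{Z}$, solvable in polynomial time by the cited algorithms; the switch from $\mathbb{N}$ to $\mathbb{Z}$ is precisely what the inverses buy us and is what keeps us out of NP-completeness. The non-$\oplus$ summands of $M$ that are to be matched by non-hole context pieces, and the residual matching of the summands coming from $sat(\Gamma)$, are then dispatched to the DO-ACM algorithm of~\cite{narendran}: we introduce one fresh variable per hole, each occurring once (this is the "distinct occurrences" hypothesis, which holds here because each hole of $C$ is filled independently), and DO-ACM runs in polynomial time in the sizes involved.

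For the complexity bookkeeping I would argue as follows. The recursion depth is bounded by $|M|$, since each recursive call is on a proper subterm or on a $\oplus$-alien summand of $M$. At each node the number of Diophantine unknowns is polynomial in $|sat(\Gamma)|$ (one per element of $sat(\Gamma)$, or per element paired with $\oplus$ vs.\ $i$), the number of equations is polynomial in $|sat(\Gamma)| + |M|$ (one per distinct atom), and the coefficients are bounded by term sizes; by the cited results SLDE over $\mathbb{Z}$ is then solvable in time polynomial in $|M|$ and $|sat(\Gamma)|$. Likewise each DO-ACM call is polynomial in the same parameters. Summing over the polynomially many recursion nodes yields the claimed polynomial bound. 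Correctness — that a solution to the assembled SLDE/DO-ACM instances yields a genuine $E$-context $C$ and conversely — follows by unwinding the flattening and the inverse equation, together with the closure properties of $sat(\Gamma)$ (in particular closure under $i_\oplus$ and under normal forms, conditions 4 and 5 of Definition~\ref{def:locallystableinverses}), which guarantee that the summands the algorithm picks are legitimately in $sat(\Gamma)$.

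I expect the main obstacle to be the AC case: specifically, proving that bounding the context leaves to elements of $sat(\Gamma)$ (rather than to arbitrary deducible terms) loses no solutions, and that the reduction to a \emph{linear} Diophantine system is faithful — i.e.\ that one never needs a summand of $M$ to be split between a context leaf and an $i$-headed cancellation in a way not captured by signed integer coefficients. This is where the local-stability hypothesis (condition 3, lifting head reductions under small contexts) and the presence of inverses must be combined carefully; the rest of the argument is a fairly routine structural recursion plus invocation of the two off-the-shelf polynomial algorithms.
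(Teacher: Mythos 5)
Your proposal is correct in outline and follows essentially the same route as the paper's proof: flatten the $\oplus$-headed layers of $M$, use the inverses to interpret multiplicities as signed integers so that the root-level AC-equality reduces to a system of linear Diophantine equations over $\mathbb{Z}$, dispatch the ground matching of candidate summands from $sat(\Gamma)$ to the DO-ACM algorithm via fresh single-occurrence variables, and handle the non-AC structure by recursion on subterms. The only difference is organizational --- the paper sweeps the $\oplus$-headed positions bottom-up, accumulating a set $S$ of already-matchable subterms and then running a greedy size-ordered deletion pass, whereas you phrase it as a single top-down recursion on the head symbol --- and both arguments leave the same completeness points (that restricting context leaves to $sat(\Gamma)$ and to linear signed combinations loses no solutions) at essentially the same level of informality.
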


\begin{proof}
Given $\Gamma$, we construct the set $sat(\Gamma)=\{T_1,\ldots, T_s\}$, which is computable and finite by Definition~\ref{locallystable}. We can then check whether $M=^?_{AC}C[T_1,\ldots,T_k]$ for  some $E$-context $C$ and terms $T_1,\ldots,T_k\in sat(\Gamma)$ using the following algorithm which is divided in its main component A), and procedures B) and C) for reducing linear Diophantine equations and selecting $T_i$'s from $sat(\Gamma)$, respectively.

\noindent A) \textbf{Algorithm 1.}

\begin{enumerate}
\item For all positions $p$ in $M$ headed by $\oplus$ starting from the longest positions in decreasing order (positions seen as sequences) solve the \emph{system of linear Diophantine equations} (see part B below) for $M|_p$ with $sat(\Gamma)\cup S$, where $S$ is built incrementally from $sat(\Gamma)$, starting with $S_0=\emptyset$, including all $M|_p$ that have solutions. In other words: 

Let $\mathcal{P}'=\{p_1,\ldots,p_t\}$ be the set of positions of $M$ such that $M|_p$ is headed with $\oplus$, organised in decreasing order. For each $p_j \in \mathcal{P}'$ let $M|_{p_j}$ be the subterm of $M$ such that 
$$M|_{p_j}=n_{j_1}\oplus \ldots \oplus n_{j_{kj}}\,\, (j=1,\ldots, t)$$ 

Recursively  find, but suppressing step 1 in this recursive call, solutions for the arguments $n_{j_{i_1}},\ldots, n_{j_{i_l}}$ of $M|_{p_j}$ with $n_{j_{im}} \in \{n_{j_1},\ldots,n_{j_{k_j}} \}$ with respective $E$-contexts $C_{j_{i_1}},\ldots, C_{j_{i_l}}$ such that 
$$n_{j_{i_m}}=C_{j_{i_m}}[T_1,\ldots,T_{s_{i_m}}]$$
where $T_q \in sat(\Gamma)\cup S_{j-1}$, $q = 1,\ldots, s_{i_m}$.

Then one checks satisfiability of the SLDE generated from $M|_{p_j}$ and $sat(\Gamma)\cup S_{j-1} \cup \{n_{j_{i_1}},\ldots, n_{j_{k_l}}\}$ (see steps B and C).

If there is a solution then $S_j:= S_{j-1} \cup \{n_{j_{i_1}},\ldots, n_{j_{k_l}}\}\cup \{M|_{p_j}\}$

\item Let $S:= S_t$. Classify the terms in $sat(\Gamma)\cup S$ by size.
\item For each term $T_i \in sat(\Gamma) \cup S$ (from terms of maximal size to terms of minimal size) check:

    \begin{itemize}
      \item For each position $q\in \mathcal{P}os(M)$ such that $T_i=_{AC}M|_q$ do      
             
              Check whether the path between $T_i$ and the root of $M$ contains a $\oplus$:
               \begin{itemize}
                \item if NOT, then delete $M|_q$ from $M $ and move to $T_{i+1}$.
                 \item if YES (there is a $\oplus$) then $M$ has a subterm  $ N$ such that $N=n_1\oplus \ldots \oplus n_j[T_i]\oplus \ldots \oplus n_k$ and $N$ cannot be constructed from $sat(\Gamma)\cup S$. Therefore,  $M$ cannot be written as an $E$-context with terms from $sat(\Gamma)$.
           
    \end{itemize}
    \end{itemize}

\item Check whether the remaining part of $M$ still contains  $E$-aliens. If it is not  the case, we have found an $E$-context $C$ and terms $M_1,\ldots,M_k \in sat(\Gamma)$ and $M=_{AC}C[M_1,\ldots,M_k]$; otherwise such an $E$-context does not exist.
\end{enumerate}

\noindent B) \textbf{Reduction to linear Diophantine equations.} 

First, notice that, for each position $p$ such that $M|_p$ is headed with $\oplus$ we have
\begin{equation}\label{Eq:eq1}
M|_p=\alpha_1 m_1\oplus \ldots \oplus \alpha_r m_r\, , \, \alpha_j \in \mathbb{N} 
\end{equation}
where $m_j$ is not headed with $\oplus$ and $\alpha_jm_j$ counts for $\underbrace{m_j\oplus \ldots \oplus m_j}_{\alpha_j-times}$.

 We want to prove that there are $\beta_1,\ldots,\beta_q \in \mathbb{N}$ such that 
\begin{equation}\label{Eq:eq2}
\beta_1T_1\oplus \ldots \oplus \beta_qT_q =_{AC} M|_p=\alpha_1 m_1\oplus \ldots \oplus \alpha_r m_r
\end{equation}
This AC-equality is only possible when
$T_i=\gamma_{1i}m_1\oplus \ldots\oplus \gamma_{ri}m_r $
for each $i$, $1\leq i \leq q\leq s$ and $\gamma_{j_i}\in \mathbb{N}$.

That is, $\beta_1T_1\oplus \ldots \oplus \beta_qT_q=_{AC}\alpha_1 m_1\oplus \ldots \oplus \alpha_r m_r$
if and only if
\begin{equation}
\begin{split}
&\beta_1(\gamma_{1_1}m_1\oplus \ldots\oplus \gamma_{r_1}m_r)\oplus \beta_2(\gamma_{1_2}m_1\oplus \ldots\oplus \gamma_{r_2}m_r)\oplus\ldots \\
&\ldots\oplus \beta_q(\gamma_{1_q}m_1\oplus \ldots\oplus \gamma_{r_q}m_r)=\alpha_1 m_1\oplus \ldots \oplus \alpha_r m_r
\end{split}
\end{equation}

 if and only if
\begin{equation}
\begin{split}
&(\gamma_{1_1}\beta_1 \oplus \gamma_{1_2}\beta_2 \ldots\oplus \gamma_{1_q}\beta_q)m_1\oplus (\gamma_{2_1}\beta_1 \oplus \gamma_{2_2}\beta_2 \ldots\oplus \gamma_{2_q}\beta_q)m_2\oplus \ldots\\
&\ldots(\gamma_{r_1}\beta_1 \oplus \gamma_{r_2}\beta_2 \ldots\oplus \gamma_{r_q}\beta_q)m_r=\alpha_1 m_1\oplus \ldots \oplus \alpha_r m_r
\end{split}
\end{equation}
if and only if

\begin{equation}
S=\left\{
\begin{split}
 \gamma_{1_1}\beta_1 \oplus \gamma_{1_2}\beta_2 \ldots\oplus \gamma_{1_q}\beta_q&=\alpha_1\\
  \gamma_{2_1}\beta_1 \oplus \gamma_{2_2}\beta_2 \ldots\oplus \gamma_{2_q}\beta_q&=\alpha_2\\
  \vdots \hspace{1cm} &\\
 \gamma_{r_1}\beta_1 \oplus \gamma_{r_2}\beta_2 \ldots\oplus \gamma_{r_q}\beta_q&=\alpha_r\\
\end{split}
\right.
\end{equation}
where $S$ is a system of linear Diophantine equations over $\mathbb{Z}$  which can be solved in polynomial time~\cite{BouConDe,ClauFor,frumkin,papadimitriou}.

\begin{remark}
We will interpret the equations \ref{Eq:eq1} and \ref{Eq:eq2} inside integer arithmetic. If there exists an index $j$ such that $m_j= i(m_j')$  and $m_j' $ is not headed with $i$ then $\alpha_j m_j=\alpha_j (i(m_j'))$ and we will take it as $(-\alpha_j)m_j'$. Therefore, we can take $\alpha_j \in \mathbb{Z}$, for all $j$. We can use the same reasoning to conclude that $\beta_j \in \mathbb{Z}$, for all $1\leq j\leq q$ and $\gamma_{j_i} \in \mathbb{Z}$, for all $i$ and $j$.
\end{remark}

\noindent C) \textbf{Selecting the $T_j's$ from $sat(\Gamma)$.}

  For each $T_i \in sat(\Gamma)$, $1\leq i\leq s$ we want to check if $T_i=\gamma_{1_i}m_1\oplus \ldots\oplus \gamma_{r_i}m_r $.

\textbf{Algorithm 2:}

For each $T_i \in sat(\Gamma)$, $1\leq i \leq s$, solve the equation
$T_i \oplus x_i =_{AC} \alpha_{1}m_1 \oplus \ldots\oplus \alpha_{r}m_r $
where $x_i$ is a fresh variable.

Since the $T_i's$ and $M$ are ground terms, this equation can be seen as an instance of the DO-ACM matching problem which can be solved in time $\mathcal{O}(|T_i\oplus x_i|.|M|_p|)$~\cite{narendran}.

If there exists $T_i \in sat(\Gamma)$ such that $T_i=\gamma_{1_i}^*m_1\oplus \ldots\oplus \gamma_{r_i}^*m_r\oplus u $, where $u$ is not empty, $\gamma_{i_j}^*\in \mathbb{N} $ and the \textbf{Algorithm 2} can no longer be applied then $T_i$ will not be selected.

Notice that each step of the algorithm can be done in polynomial time in $|M|$ and $|sat(\Gamma)|$. Therefore, the whole procedure is polynomial in $|M|$ and $sat(\Gamma)$.
\end{proof}

\begin{remark}
 For the proof we can adopt an ordering in which, for instance, variables are smaller than constants, constants smaller than function symbols, and function symbols are also ordered, but other suitable order can be used.  Terms are compared by the associated lexicographical ordering  built from this ordering on symbols. 
\end{remark}

\begin{example}[Finite Abelian Groups]
We consider the theory of Abelian Groups where the signature is $\Sigma_{AG}=\{+,0,i\}$ for $i$ the inverse function and $+$ the AC group operator. The equational theory $E_{AG}$ is:
$$
E_{AG}=\left\{
\begin{array}{l@{\hspace{1cm}}c @{\hspace{1cm}}r}
\begin{array}{rcl}
x+(y+z)&=&(x+y)+z\\
x+y&=&y+x\\
i(x+y)&=&i(y)+i(x)\\
\end{array}
&
\begin{array}{rcl}
x+0&=&x\\
x+i(x)&=&0
\end{array}
&
\begin{array}{rcl}
i(i(x))&=&x\\
i(0)&=&0
\end{array}
\end{array}
\right.
$$
We define $\mathcal{R}_{AG} $ by orienting the equations from left to right (excluding the equations for associativity and commutativity). $\mathcal{R}_{AG}$ is AC-convergent. The size $c_{E_{AG}}$ of the theory is at least 5. In the following prove that  $E_{AG}$ is locally stable with inverses for finite models, i.e., we define a set $sat(\Gamma)$ satisfying the properties in the Definition \ref{locallystable}.  For a given set $\Gamma=\{M_1,\ldots,M_k\}$ of ground terms in normal form, $sat(\Gamma)$ is the smallest set such that:
\begin{enumerate}
\item $M_1,\ldots,M_k\in sat(\Gamma)$;
\item $M_1,\ldots,M_k\in sat(\Gamma)$ and $f(M_1,\ldots,M_k)\in st(sat(\Gamma))$ then $f(M_1,\ldots,M_k)\in sat(\Gamma)$, $f\in \Sigma_{AG}$;
\item if $M_i,M_j \in sat(\Gamma)$ and $M_i+M_j\stackrel{h}{\rightarrow}M$ via rule $x+i(x)\rightarrow 0$ then $M\downarrow\in sat(\Gamma)$;
\item if $M_j \in sat(\Gamma)$ then $i(M_j)\downarrow \in sat(\Gamma)$;
\item if $M_i=_{AC}M_j$ and $M_i \in sat(\Gamma)$ then $M_j \in sat(\Gamma).$
 \end{enumerate}  

The set $sat(\Gamma)$ defined for Finite Abelian Groups is finite. 

\end{example}

Although it was said in \cite{AbadiCo2006} that the theory of Abelian Groups is locally stable, no proof of such fact was found in the literature. With the proviso that the Abelian Group under consideration is finite, we have demonstrated that $|sat(\Gamma)|$ is exponential in the size of $|\Gamma|$.


These results give rise to the decidability of deduction for locally stable theories. Notice that  polynomiality on $|sat(\Gamma)|$ relies on the use of the AC-matching algorithm proposed in Lemma~\ref{lemma:acmatching}. Unlike~\cite{AbadiCo2006},  we do not need to compute of the congruence class modulo AC of $M$ (which may be exponential). This gives us a slightly different version of the decidability theorem:

\begin{theorem}\label{theorem:epdecidability}
Let $E$ be a locally stable theory satisfying (*). If $\Gamma=\{M_1,\ldots,M_n\}$ is a finite set of ground terms in normal form and $M$ is a ground term in normal form, then $\Gamma\vdash M$ is decidable in polynomial time in $|M|$ and $|sat(\Gamma)|$.
\end{theorem}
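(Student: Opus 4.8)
The plan is to reduce the decision of $\Gamma \vdash M$ to the restricted higher-order AC-matching problem of Lemma~\ref{lemma:acmatching}, using Lemma~\ref{lemma:deductioncontext} as the bridge. First I would invoke Lemma~\ref{lemma:deductioncontext}: since $E$ is locally stable and $\Gamma, M$ are ground and in normal form, we have $\Gamma \vdash M$ if and only if there exist an $E$-context $C$ and terms $M_1', \ldots, M_k' \in sat(\Gamma)$ with $M =_{AC} C[M_1', \ldots, M_k']$. This converts the deducibility question into a purely syntactic matching question over $sat(\Gamma)$, eliminating any need to reason about arbitrary derivations in system $\mathcal{N}$.

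Next I would observe that, by Definition~\ref{def:locallystableinverses} (the locally stable with inverses version, which applies since $E$ satisfies (*)), the set $sat(\Gamma)$ is finite and computable; moreover its construction can be carried out in time bounded polynomially in $|sat(\Gamma)|$ once the closure conditions have stabilised, and each closure step (rules 1--6) is effective. So the algorithm is: (i) compute $sat(\Gamma) = \{T_1, \ldots, T_s\}$; (ii) run the matching procedure of Lemma~\ref{lemma:acmatching} to decide whether $M =_{AC} C[T_{i_1}, \ldots, T_{i_k}]$ for some $E$-context $C$ and some choice of terms from $sat(\Gamma)$; (iii) answer ``yes'' iff that procedure succeeds. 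Correctness is immediate from the ``if and only if'' of Lemma~\ref{lemma:deductioncontext}, and the complexity bound is exactly the one established in Lemma~\ref{lemma:acmatching}, namely polynomial in $|M|$ and $|sat(\Gamma)|$: step (i) is polynomial in $|sat(\Gamma)|$ by hypothesis, and step (ii) is polynomial in $|M|$ and $|sat(\Gamma)|$ by that lemma. Combining the two gives a total cost polynomial in $|M|$ and $|sat(\Gamma)|$.

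I would also spell out why the presence of inverses matters here, since that is the point of restricting to theories satisfying (*): the matching algorithm of Lemma~\ref{lemma:acmatching} reduces the AC-part of the problem to solving a system of linear Diophantine equations, and only over $\mathbb{Z}$ (not $\mathbb{N}$) is this polynomial; the inverse symbols $i_\oplus$ are what allow coefficients to be interpreted as negative integers (see the Remark following part~B of the proof of Lemma~\ref{lemma:acmatching}). Without them the corresponding SLDE would be over $\mathbb{N}$, which is NP-complete, so the polynomial bound would be lost. This is essentially a remark rather than a step of the proof proper, but it justifies the hypotheses of the theorem.

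The main obstacle is not in this theorem's argument — which is a short composition of Lemma~\ref{lemma:deductioncontext} and Lemma~\ref{lemma:acmatching} — but in making sure the complexity accounting is honest: one must check that invoking Lemma~\ref{lemma:acmatching} as a black box is legitimate, i.e. that the version of $sat(\Gamma)$ used there (Definition~\ref{locallystable}, five conditions) is compatible with the one used here (Definition~\ref{def:locallystableinverses}, six conditions, with the extra closure under $i_\oplus$), and that ``polynomial in $|sat(\Gamma)|$'' genuinely absorbs the cost of enumerating $E$-contexts of bounded size and the cost of the DO-ACM calls. I would therefore state explicitly that the bounds $c_E$ and $c_E^2$ on context sizes in the definition of $sat(\Gamma)$ are constants depending only on $E$, so that the number of candidate small contexts is constant, keeping everything polynomial in $|M|$ and $|sat(\Gamma)|$.
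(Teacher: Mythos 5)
Your proof matches the paper's: the paper's entire argument is that the result follows directly from Lemma~\ref{lemma:deductioncontext} (reducing $\Gamma\vdash M$ to the existence of an $E$-context over $sat(\Gamma)$) combined with Lemma~\ref{lemma:acmatching} (deciding that matching question in polynomial time in $|M|$ and $|sat(\Gamma)|$). Your additional remarks on the role of the inverses and on the compatibility of the two definitions of $sat(\Gamma)$ are sensible elaborations but do not change the route.
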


\begin{proof}
The result follows directly from Lemmas \ref{lemma:acmatching} and \ref{lemma:deductioncontext}.
\end{proof}

In the following example we consider the  \emph{Pure AC-theory} which can be proven to be locally stable but does not contain the inverse of the AC-symbol $+$.

\begin{example}[Pure $AC$ Theory]
$\Sigma_{AC}$ contains only constant symbols, the AC-symbol $\oplus$ and the equational theory contains only the $AC$ equations for $\oplus$:
$$
AC=\left\{
\begin{array}{l@{\hspace{3cm}}r}
x\oplus y= y\oplus x
&
x\oplus(y\oplus z) =(x\oplus y)\oplus z
\end{array}
\right\}
$$
In this case, $E=AC$  and  $ \mathcal{R}=\emptyset$ is the  AC-convergent TRS associated to $E$. Let  $\Gamma=\{M_1,\ldots,M_k\}$ be a finite set of ground terms in normal form. Let us define $sat(\Gamma)$ for the pure $AC$ theory as the smallest set such that 
\begin{enumerate}
\item $M_1,\ldots, M_k\in sat(\Gamma)$;
\item if $M_i,M_j\in sat(\Gamma)$ and $M_i\oplus M_j \in st(sat(\Gamma))$ then $M_i\oplus M_j\in sat(\Gamma)$.
\item if $M_i=_{AC}M_j$ and $M_i\in sat(\Gamma)$ then $M_j\in sat(\Gamma)$.
\end{enumerate}
The set $sat(\Gamma)$ is finite since we add only terms whose size is smaller or equal than the maximal size of the terms in $\Gamma$. It is easy to see that the set $sat(\Gamma)$ satisfies the rules  \ref{rule1},\ref{rule2}, \ref{rule4} and \ref{rule5}. Since $\mathcal{R}=\emptyset$ it follows that \ref{rule3} is also satisfied. Therefore,
 $AC$ is locally stable.

\emph{ The size  of $sat(\Gamma)$:} 
\begin{itemize}
\item Steps 1 and 2: only subterms in $sat(\Gamma)$ are added.
\item Step 3: for each $M_i\in sat(\Gamma)$  add $M_j=_{AC}M_i\in sat(\Gamma)$. Notice that the number of terms added in $sat(\Gamma)$, in this case, depends on the number of occurrences of $\oplus$ in $M_i$. Suppose that $M_i$ contains $n$ occurrences of $\oplus$: 
$$M_i=M_{i_1}\oplus \ldots \oplus M_{i_{n+1}}.$$
There are $(n+1)!$ terms $M_j$ such that $M_1=_{AC}M_j$.
\end{itemize}
Suppose that each $M_i$ in $\Gamma$ contains $n_i$ occurrences of $\oplus$.
Then, $|M_i|=\displaystyle\sum_{j=1}^{n_i+1}|M_{i_j}|+n_i.$ Let $n=\max_{1\leq i\leq k}\{n_i\}$. There exists an index $r$ such that $M_r$ contains $n_r=n$ occurrences of $\oplus$.
Since $|\Gamma|=\displaystyle\sum_{i=1}^k|M_i|$ it follows that 
$ n \leq |M_r|-\displaystyle\sum_{j=1}^{n+1}|M_{r_j}|\leq |\Gamma|.$
Then the number of terms added in step 3 is  
$\displaystyle\sum_{i=1}^k (n_i+1)!\leq (n+1)! \cdot k \leq (|\Gamma|+1)!\cdot k .$

\end{example}

\begin{remark} In this case one can adapt Lemma \ref{lemma:acmatching} such that the algorithm would rely on solving systems of linear Diophantine equations over $\mathbb{N}$ which is NP-complete~\cite{papadimitriou}. Therefore, the complexity of IDP for pure AC would be exponential, agreeing with previous results~\cite{lafourcade}.
\end{remark}

\section{Elementary Deduction Problem for Locally Stable Theories}
To establish necessary concepts for the next results, we recall the well-known translation between  natural deduction and sequent calculus systems to model the IDP as a proof search in sequent calculus, whose properties (such as cut or subformula) facilitate the study of decidability of deductive systems. For an AC-convergent equational theory E, the System $\mathcal{N}$ in Table \ref{equationalreasoning} is equivalent to the $(id)$-rule of the sequent calculus (Table \ref{DeductionRulesForIntruder}) introduced in~\cite{TiGo2009}:
\begin{prooftree}
\AxiomC{$\stackrel{M\approx_{E} C[M_1,\ldots, M_k]}{\textnormal{C[\;] an E-context, and } M_1,\ldots,M_k\in \Gamma} $}
\RightLabel{($id$)}
\UnaryInfC{$\Gamma \vdash M$}
\end{prooftree}

 Consequently,  IDP for System $\mathcal{N}$ is  equivalent to the \emph{Elementary Deduction Problem}: 
\begin{definition}
Given an AC-convergent equational theory $E$ and a sequent $\Gamma \vdash M$ ground and in normal form, the \emph{elementary deduction problem} (EDP) for $E$, written $\Gamma \Vdash_{E}M$, is the problem of deciding whether the $(id)$-rule is applicable in $\Gamma\vdash M$. 
\end{definition}
The theorem below decides EDP  for  locally stable theories :

\begin{theorem}\label{theorem:edpisptime}
Let $E$ be a locally stable equational theory satisfying (*). Let $\Gamma \vdash M$ be a  ground sequent  in normal form. The \emph{elementary deduction problem} for the theory $E$ ($\Gamma \Vdash_E M$) is decidable in polynomial time in $|sat(\Gamma)|$ and $|M|$.
\end{theorem}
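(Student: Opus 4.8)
The plan is to reduce the elementary deduction problem $\Gamma \Vdash_E M$ directly to the higher-order AC-matching question already settled in Lemma~\ref{lemma:acmatching}. Recall that the $(id)$-rule of the sequent calculus is applicable to $\Gamma \vdash M$ precisely when there is an $E$-context $C[\ldots]$ and messages $M_1, \ldots, M_k \in \Gamma$ with $M \approx_E C[M_1, \ldots, M_k]$; this is by definition what $\Gamma \Vdash_E M$ means. So the first step is to observe that, since $E$ is locally stable and $\Gamma$ consists of ground terms in normal form, we may build the set $sat(\Gamma)$, which is finite and computable by Definition~\ref{def:locallystableinverses}, and which by conditions~1 and~5 of that definition satisfies $\Gamma \subseteq sat(\Gamma)$ and $\Gamma \vdash M'$ for every $M' \in sat(\Gamma)$.

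The second step is the equivalence: $\Gamma \Vdash_E M$ holds if and only if there is an $E$-context $C$ and terms $T_1, \ldots, T_k \in sat(\Gamma)$ with $M =_{AC} C[T_1, \ldots, T_k]$. The nontrivial direction is from left to right: if $M \approx_E C[M_1, \ldots, M_k]$ with $M_i \in \Gamma$, then since $M$ is in normal form and $\approx_E$ is the convergence equivalence of the AC-convergent rewrite system $\mathcal{R}$, we have $C[M_1, \ldots, M_k] \stackrel{*}{\rightarrow}_{\mathcal{R}\cup AC} M$. Now the $M_i$ are in $\Gamma \subseteq sat(\Gamma)$ and are in normal form (being elements of $\Gamma$), so Corollary~\ref{corollary:epcloseness} applies and yields an $E$-context $C_2$ and terms $M''_j \in sat(\Gamma)$ with $M =_{AC} C_2[M''_1, \ldots, M''_l]$. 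The right-to-left direction is immediate: if $M =_{AC} C[T_1, \ldots, T_k]$ with $T_i \in sat(\Gamma)$, then $a$ fortiori $M \approx_E C[T_1, \ldots, T_k]$; since $\Gamma \vdash T_i$ for each $i$ by condition~5, closure of $\vdash$ under the $(f_I)$-rules of System $\mathcal{N}$ gives $\Gamma \vdash M$, hence $\Gamma \Vdash_E M$ by the equivalence between System $\mathcal{N}$ and the $(id)$-rule. (Alternatively, one may invoke Lemma~\ref{lemma:deductioncontext} directly, which already packages this equivalence.)

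The third and final step is to apply Lemma~\ref{lemma:acmatching}: the question ``is there an $E$-context $C$ and $T_1, \ldots, T_k \in sat(\Gamma)$ such that $M =_{AC} C[T_1, \ldots, T_k]$?'' is decidable in polynomial time in $|M|$ and $|sat(\Gamma)|$. Combining the three steps, $\Gamma \Vdash_E M$ is decidable within the same bound, which is exactly the claim. The whole argument is short because the heavy lifting has been done elsewhere: the genuine content is concentrated in Lemma~\ref{lemma:acmatching} (the combination of the DO-ACM algorithm with the polynomial-time solver for systems of linear Diophantine equations over $\mathbb{Z}$) and in the closure properties of $sat(\Gamma)$ recorded in Corollary~\ref{corollary:epcloseness} and Lemma~\ref{lemma:deductioncontext}.

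The main obstacle, such as it is, is making sure the translation between the natural-deduction presentation (System $\mathcal{N}$) and the sequent-calculus $(id)$-rule is invoked correctly, so that ``$\Gamma \Vdash_E M$'' is genuinely the statement ``$\exists$ $E$-context $C$, $\exists M_1,\ldots,M_k \in \Gamma$ with $M \approx_E C[M_1,\ldots,M_k]$'' and nothing stronger; once that identification is in place, the reduction to $sat(\Gamma)$-matching via Corollary~\ref{corollary:epcloseness}/Lemma~\ref{lemma:deductioncontext} and then to Lemma~\ref{lemma:acmatching} is routine. A minor point to check is that every term to which we apply Corollary~\ref{corollary:epcloseness} is in normal form — $M$ is by hypothesis, and the elements of $\Gamma$ are by hypothesis — so the hypotheses of that corollary are met verbatim.
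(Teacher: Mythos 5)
Your proposal is correct and follows essentially the same route as the paper's own proof: both reduce $\Gamma \Vdash_E M$ to the question of whether $M =_{AC} C[T_1,\ldots,T_k]$ for terms $T_i \in sat(\Gamma)$, using Corollary~\ref{corollary:epcloseness} for the ``only if'' direction, condition~5 of local stability (equivalently Lemma~\ref{lemma:deductioncontext}) for the ``if'' direction, and Lemma~\ref{lemma:acmatching} for the polynomial-time bound. Your write-up is in fact somewhat more explicit than the paper's about why $C[M_1,\ldots,M_k] \stackrel{*}{\rightarrow}_{\mathcal{R}\cup AC} M$ follows from $M \approx_E C[M_1,\ldots,M_k]$ and about checking the normal-form hypotheses of the corollary.
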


\begin{proof}
By Lemma \ref{lemma:acmatching}, the problem whether $M=_{AC}C[M_1,\ldots,M_k]$ for an $E$-context $C$ and terms $M_1,\ldots,M_k\in sat(\Gamma)$ is decidable in polynomial time in $|sat(\Gamma)|$ and $|M|$.
If $M=_{AC}C[M_1,\ldots,M_k]$ for an $E$-context $C$ and terms $M_1,\ldots,M_k\in sat(\Gamma)$ then  there exist an $ E$-context $C'$ and terms $M'_1,\ldots,M'_n\in \Gamma$ such that
$C[M'_1,\ldots,M'_n]\stackrel{*}{\rightarrow}_{\mathcal{R}\cup AC} M.$
It is enough to observe that for all $T\in sat(\Gamma)$, $T$ can be constructed from the terms in $\Gamma$.

If there is no $E$-context $C$ and terms $M_1,\ldots,M_k\in sat(\Gamma)$ such that $M=_{AC}C[M_1,\ldots,M_k]$ then, by Corollary \ref{corollary:epcloseness}, there are no $\textsf E$-context and terms  $M'_1, \ldots, M'_t\in sat(\Gamma)$ such that $C[M'_1, \ldots, M'_t]\stackrel{*}{\rightarrow}_{\mathcal{R}\cup AC} M.$ Therefore,  there is no $E$-context $C''$ and terms $M''_1,\ldots,M''_l\in \Gamma$ such that $C''[M''_1, \ldots, M''_l]\stackrel{*}{\rightarrow}_{\mathcal{R}\cup AC} M.$
Thus, the EDP for $E$ is decidable in polynomial time in $|sat(\Gamma)|$ and $|M|$.
\end{proof}

\subsection{Extension with Blind Signatures}\label{extension}
Blind signature is a basic cryptographic primitive in e-cash. This concept was introduced by David Chaum in \cite{Chaum} to allow a  bank (or anyone) sign messages without seeing them. David Chaum's idea  was to use this homomorphic property in such a way that Alice can multiply the original message with a random (encrypted) factor that will make the resulting image meaningless to the Bank. If the Bank agrees to sign this random-looking data and return it to Alice, she is able to divide out the blinding factor such that the Bank's signature in the original message will appear.

Given a locally stable equational theory $E$, we extend the signature $\Sigma_{E}$ with $\Sigma_C$, a set containing function symbols for ``constructors'' for blind signatures, in order to obtain decidability results for the extension of the IDP for System $\mathcal{N}$ taking into account some rules for blind signatures. 

\subsubsection*{Extended Syntax}

The signature  $\Sigma$ consists of function symbols and is defined by the union of two sets: $\Sigma =\Sigma_C \cup \Sigma_{E}$ ( with $\Sigma_{\textsf{E}}\cap \Sigma_C=\emptyset$), where 
$$\Sigma_C=\left\{\textsf{pub}(\_), \textsf{sign}(\_\; ,\_), \textsf{blind}(\_\; ,\_), \left\{\_\right\}_{\_}, <\_\; ,\_>\right\}$$ represents the \emph{constructors}, whose interpretations are: $\textsf{pub}(M)$ gives  the public key generated from a private key $M$; $\textsf{blind}(M,N)$ gives $M$  encrypted with $N$ using blinding encryption; $\textsf{sign}(M,N)$ gives $M$ signed with a private key  $N$; $\left\{M\right\}_N$ gives $M$ encrypted with the key $N$ using Dolev-Yao symmetric encryption; $\langle M,N\rangle$ constructs a pair of terms from $M$ and $N$.
Then the extended grammar of the set of \emph{terms} or messages is given as
\begin{equation*}
M,N \;:= a \;|\; x \;| f(M_1,\ldots, M_n)|\textsf{pub}(M) | \textsf{sign}(M,N) | \textsf{blind}(M,N)|\left\{M\right\}_N|\langle M, N\rangle
\end{equation*}

Notice that, with the extension  an $E$-alien term $M$ is a term headed with $f\in \Sigma_C$ or $M$ is a private name/constant. An $E$-alien subterm $M$ of $N$ is said to be an $E$-\emph{factor} of $N$ if there is another subterm $F$ of $N$ such that $M$ is an immediate subterm of $F$ and $F$ is headed by a symbol $f\in\Sigma_{E}$. This notion can be extended to sets in the obvious way:  a term $M$ is an $E$-factor of  $\Gamma$ if it is an $E$-factor of a term in $\Gamma$. These notions  were introduced in \cite{TiGo2009}.

The operational meaning of each constructor will be defined by their corresponding inference rules in the sequent calculus to be described.

\subsubsection*{Extending the EDP to Model Blind Signatures} 

Following the approach proposed in \cite{TiGo2009}, we extend EDP with blind signatures using the sequent calculus $\mathcal{S}$ described in Table \ref{DeductionRulesForIntruder}. In this way, we can model intruder deduction for the combination of a locally stable theory $E$ with blind signatures in a modular way: the theory $E$ is used in the $id$ rule, while blind signatures are modelled with additional deduction rules. As shown below, this approach has the advantage that we can derive decidability results for the intruder deduction problem without needing to prove that the combined theory is locally stable (in contrast with the results in the previous section and in~\cite{AbadiCo2006}).

\begin{table}[ht]
\caption{System $\mathcal{S}$\;: Sequent Calculus for the Intruder}
\label{DeductionRulesForIntruder}
{\small
\hrule
\begin{prooftree}
\AxiomC{$\stackrel{M\approx_{E} C[M_1,\ldots, M_k]}{\textnormal{C[\;] an E-context,} M_1,\ldots,M_k\in \Gamma} $}
\RightLabel{($id$)}
\UnaryInfC{$\Gamma \vdash M$}
\DisplayProof
\defaultHypSeparation 
\hspace{8mm}
\AxiomC{$\Gamma \vdash M$}
\AxiomC{$\Gamma, M \vdash T$}
\RightLabel{($cut$)}
\BinaryInfC{$\Gamma \vdash T$}
\end{prooftree}

\begin{prooftree}
\AxiomC{$\Gamma, \left\langle M,N \right\rangle , M,N \vdash T$}
\RightLabel{($p_L$)}
\UnaryInfC{$\Gamma,\left\langle M,N \right\rangle \vdash T$}
\DisplayProof
\defaultHypSeparation 
\hspace{3cm}
\AxiomC{$\Gamma \vdash M$}
\AxiomC{$\Gamma \vdash N$}
\RightLabel{($p_R$)}
\BinaryInfC{$\Gamma \vdash \left\langle M,N \right\rangle$}
\end{prooftree}

\begin{prooftree}
\AxiomC{$\Gamma \vdash M$}
\AxiomC {$\Gamma \vdash K$}
\RightLabel{($e_R$)}
\BinaryInfC{$\Gamma \vdash \left\{ M\right\}_K $}
\DisplayProof
\defaultHypSeparation 
\hspace{5mm}
\AxiomC{$\Gamma, \left\{ M\right\}_K \vdash K$}
\AxiomC{$\Gamma, \left\{ M\right\}_K ,M,K\vdash N$}
\RightLabel{($e_L$)}
\BinaryInfC{$\Gamma,\left\{ M\right\}_K \vdash N$}
\end{prooftree}

\begin{prooftree}
\AxiomC{$\Gamma \vdash M$}
\AxiomC{$\Gamma \vdash K$}
\RightLabel{\small{(\textsf{sign}$_R$)}}
\BinaryInfC{$\Gamma \vdash $ \textsf{sign}$(M,K)$}
\DisplayProof
\defaultHypSeparation 
\hspace{2.5cm}
\AxiomC{$\Gamma \vdash M$}
\AxiomC{$\Gamma \vdash K$}
\RightLabel{ \footnotesize{(\textsf{blind}$_R$)}}
\BinaryInfC{$\Gamma \vdash$ \textsf{blind}$(M,K)$ }
\end{prooftree}

\begin{prooftree}
\AxiomC{$\Gamma$,\textsf{sign}$(M,K)$, \textsf{pub}$(L), M \vdash N$}
\RightLabel{\footnotesize{$(\textsf{sign}_L) K=\!_{AC}\! L$}\hspace{4.5cm}}
\UnaryInfC{$\Gamma$,\textsf{sign}$(M,K)$, \textsf{pub}$(L)\vdash N$}
\end{prooftree}

\begin{prooftree}
\AxiomC{$\Gamma$,\textsf{blind}$(M,K)\vdash K$}
\AxiomC{$\Gamma$,\textsf{blind}$(M,K), M, K\vdash N$}
\RightLabel{ (\textsf{blind}$_{L_1}$)\hspace{3cm}}
\BinaryInfC{$\Gamma$,\textsf{blind}$(M,K)\vdash N$}
\end{prooftree}

\begin{prooftree}
\AxiomC{$\Gamma,\sign(\blind(M,R),K)\vdash R$}
\AxiomC{\hspace{-3mm}$\Gamma, \sign(\blind(M,R),K), \sign(M,K),R\vdash N$}
\RightLabel{ (\textsf{blind}$_{L_2}$)}
\BinaryInfC{$\Gamma,\sign(\blind(M,R),K)\vdash N$}
\end{prooftree}

\begin{prooftree}
\AxiomC{$\Gamma \vdash A$}
\AxiomC{$\Gamma, A \vdash M$}
\RightLabel{$(acut)$, $A$ is an $E$-factor of $\Gamma \cup \left\{M\right\}$\hspace{3cm}}
\BinaryInfC{$\Gamma \vdash M$}
\end{prooftree}
\hrule
}
\end{table}

Analysing the system $\mathcal{S}$ one can make the following observations:
\begin{enumerate}
\item The rules $p_L, e_L$,$ \textsf{sign}_L$,$ \textsf{blind}_{L1}$, $\textsf{blind}_{L2}$ and $acut$ are called \emph{left rules} with $\langle M,N\rangle$, $\{M\}_K$, $\textsf{sign}(M,K)$, $\textsf{blind}(M,K)$, $\textsf{sign}(\textsf{blind}((M,R),K)$ and $A$ as \emph{principal term}, respectively. The rules $p_R, e_R,\textsf{sign}_R$ and $ \textsf{blind}_{R}$ are called \emph{right rules}.
\item  The rule $(acut)$, called \emph{analytic cut} is necessary  to prove  cut rule \emph{admissibility}. A complete proof can be found in~\cite{TiGo2009,nantesayala}.
\end{enumerate}

\begin{remark} Considerations about locally stable theories with blind signatures:
\begin{enumerate}
\item All the results proved on Section \ref{sec:locallystable} are valid under this extension with blind signatures since the results depend only on the equational theory $E$ and on the  symbols in $\Sigma_E$. Unlike example 5.2.4~\cite{AbadiCo2006}, the theory of Blind Signatures is not considered as part of the equational theory, the functions are abstracted in the set of constructors with the operational meaning represented in the sequent calculus.
\item In \cite{TiGo2009} it is shown that the intruder deduction problem for $\mathcal{S}$ is \emph{polynomially reducible} to the EDP for $E$: \emph{if the EDP problem in $E$ has complexity $f(m)$ then the deduction problem $\Gamma\vdash M$ in $\mathcal{S}$ has complexity $O(n^k.f(n))$ for some constant $k$}\footnote{Here, $m$ is the size of the input of EDP and $n$ is the cardinality of the set $St(\Gamma\cup \{M\})$ defined in \cite{TiGo2009}}.  
This result was proved for an  AC-convergent equational theory $E$ containing only one $AC$ symbol and extended to finite a combination of disjoint AC-convergent equational theories each one containing only one AC-symbol. 
\item In \cite{nantesayala}, it was proved that deduction in $\mathcal{S}$ reduces polynomially to $EDP$ in the case of the AC-convergent equational theory $\ep$, which contains three different AC-symbols and rules for exponentiation and cannot be split into disjoint parts.
\end{enumerate}
\end{remark}

As a consequence of the results mentioned in the above remark, we can state the following result: 
\begin{corollary}\label{theo:eppblind}
Let $E$ be a locally stable theory satisfying (*) containing only one AC-symbol or formed by a finite and disjoint combination of AC-symbols.
Let $\Gamma$ a finite set of ground terms in normal form and $M$ a ground term  in normal form. The IDP for  the theory $E$ combined with blind signatures ($\Gamma\vdash M$) is decidable in polynomial time in $|sat(\Gamma)|$ and $|M|$.
\end{corollary}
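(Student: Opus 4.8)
The plan is to obtain the result by composing two facts already available: the polynomial decidability of the elementary deduction problem for $E$ (Theorem~\ref{theorem:edpisptime}) and the polynomial reduction of intruder deduction in the sequent calculus $\mathcal{S}$ to the EDP for $E$ (the result of~\cite{TiGo2009}, extended in~\cite{nantesayala}, recalled in the remark preceding the statement). Note first that a locally stable theory is in particular AC-convergent, and that by hypothesis $E$ is either a single-AC-symbol theory or a finite disjoint combination of such theories; this is exactly the setting in which the reduction of~\cite{TiGo2009,nantesayala} was established, so the restriction on the shape of $E$ in the statement is precisely what makes that reduction applicable, and nowhere do we need the combination of $E$ with blind signatures to be locally stable (indeed, by the observations in the remark, all of Section~\ref{sec:locallystable} carries over to the extended signature because it depends only on $E$ and $\Sigma_E$).

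First I would invoke the reduction. Deciding $\Gamma\vdash M$ in $\mathcal{S}$ reduces, via admissibility of $cut$ and a terminating analytic proof search (as proved in~\cite{TiGo2009,nantesayala}), to answering a polynomial number of queries $\Gamma'\Vdash_E M'$ in which $\Gamma'$ and $M'$ are built from the set $St(\Gamma\cup\{M\})$ of subterms and $E$-factors of $\Gamma\cup\{M\}$; if each query is decided in time $f(n)$, the whole search runs in $O(n^k\cdot f(n))$ for a constant $k$, with $n=|St(\Gamma\cup\{M\})|$. Then I would apply Theorem~\ref{theorem:edpisptime}: since $E$ is locally stable satisfying (*), each query $\Gamma'\Vdash_E M'$ is decidable in time polynomial in $|sat(\Gamma')|$ and $|M'|$.

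It remains to bound these parameters in terms of $|sat(\Gamma)|$ and $|M|$. We have $n=|St(\Gamma\cup\{M\})|\leq |\Gamma|+|M|\leq |sat(\Gamma)|+|M|$ using $\Gamma\subseteq sat(\Gamma)$, and every right-hand side $M'$ occurring in a query is a subterm of $\Gamma\cup\{M\}$, so $|M'|\leq |\Gamma|+|M|$. The delicate point, which I expect to be the main obstacle, is that the left-hand sides $\Gamma'$ are not $\Gamma$ itself but $\Gamma$ augmented with finitely many ground (normal-form) subterms/$E$-factors drawn from $St(\Gamma\cup\{M\})$, so one must check that $|sat(\Gamma')|$ stays polynomially bounded. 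This is handled by observing that the added terms have size at most $\max(|\Gamma|,|M|)$ and lie among the normal forms of subterms of $\Gamma\cup\{M\}$: by local stability of $E$, a single saturated set computed from $\Gamma$ together with the (normal forms of) subterms of $M$ is finite, computable, and of size polynomial in $|sat(\Gamma)|$ and $|M|$, and it subsumes every $sat(\Gamma')$ arising in the search. Substituting $f(n)=\mathrm{poly}(|sat(\Gamma)|,|M|)$ into the bound $O(n^k f(n))$ then yields decidability of $\Gamma\vdash M$ in $\mathcal{S}$, i.e. of the IDP for $E$ combined with blind signatures, in polynomial time in $|sat(\Gamma)|$ and $|M|$, as claimed.
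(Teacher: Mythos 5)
Your proof follows the same route as the paper, which derives the corollary simply by composing Theorem~\ref{theorem:edpisptime} with the polynomial reduction of deduction in $\mathcal{S}$ to EDP for $E$ recalled in the preceding remark (the paper offers no further argument beyond that composition). Your additional discussion of bounding $|sat(\Gamma')|$ for the augmented antecedents arising during proof search addresses a real subtlety that the paper leaves entirely implicit --- and note that local stability by itself guarantees only finiteness and computability of the saturated sets, not the polynomial size bound you assert, so that step would need its own justification; but this is a gap shared with (indeed inherited from) the paper's own treatment rather than a defect of your approach.
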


\section{Conclusion}
We have shown that the IDP is decidable for locally stable
theories. In order to obtain the polynomiality result, a restriction
on the equational theory is necessary: the theory must contain
inverses of all AC-symbols.  We have proposed an algorithm to solve a restricted case of
higher-order AC-matching by using the DO-ACM matching algorithm
combined with an algorithm to solve linear Diophantine equations over
$\mathbb{Z}$. Based on this algorithm, we obtain a polynomial
decidability result for IDP for a class of locally stable theories
with inverses. Our algorithm does not need to compute the set of
normal forms modulo AC of a given term (which may be exponential). 
Therefore, we can conclude that the deducibility relation is decidable
in polynomial time for a very restricted class of equational theories,
it does not work for all locally stable theories as \cite{AbadiCo2006}
has claimed.  It also decides the IDP for the combination of locally
stable theories with the theory of blind signatures, using a
translation between natural deduction and sequent calculus.

\nocite{*}

\bibliographystyle{eptcs}

\begin{thebibliography}{1}
\providecommand{\bibitemdeclare}[2]{}
\providecommand{\surnamestart}{}
\providecommand{\surnameend}{}
\providecommand{\urlprefix}{Available at }
\providecommand{\url}[1]{\texttt{#1}}
\providecommand{\href}[2]{\texttt{#2}}
\providecommand{\urlalt}[2]{\href{#1}{#2}}
\providecommand{\doi}[1]{doi:\urlalt{http://dx.doi.org/#1}{#1}}
\providecommand{\bibinfo}[2]{#2}



\bibitem{AbadiCo2006}
M. Abadi and V. Cortier.
\newblock Deciding knowledge in security protocols under equational theories.
\newblock {\em Theoretical Computer Science}, 367(1-2):2--32, 2006.
\doi{10.1016/j.tcs.2006.08.032}.

\bibitem{appliedpicalculus}
M. Abadi and C. Fournet.
\newblock Mobile Values, New Names, and Secure Communication.
\newblock In {\em Proc.  28$^{th}$ ACM SIGPLAN-SIGACT symposium on Principles of programming languages (POPL'01)}, pages 104--115, 2001.
\doi{10.1145/360204.360213}.

\bibitem{spi}
M. Abadi and A.D.~Gordon
\newblock A Calculus for Cryptographic Protocols: The spi Calculus.
\newblock  {\em Information and Computation }, 148(1): 1--70, 1999.
\doi{10.1006/inco.1998.2740}.

\bibitem{avispa}
A. Armando \emph{et al}.
\newblock The AVISPA Tool for the Automated Validation of Internet Security Protocols and Applications.
\newblock In {\em Proc. 17$^{th}$ Computer Aided Verification (CAV'05)}, volume 3576, pages 281--285. Springer-Verlag 2005.
\doi{10.1007/11513988\_27}.

\bibitem{AFSextended}
M.~Ayala-Rinc\'{o}n, M.~Fern\'{a}ndez and D.~Nantes-Sobrinho.
\newblock Elementary Deduction Problems for Locally Stable Theories with Normal Forms (extended version).
\newblock \url{http://www.mat.unb.br/~dnantes/Publications}.

\bibitem{baader}
F.~Baader and T.~Nipkow.
\newblock{\em Term Rewriting and All That}.
\newblock CUP, 1998.

\bibitem{YAPA}
M.~Baudet, V.~Cortier and S.~Delaune.
\newblock YAPA: A Generic Tool for Computing Intruder Knowledge.
\newblock In {\em Proc. of  20$^{th}$ International
               Conference on Rewriting Techniques and Applications (RTA'09)}, volume 5595 of {\em LNCS}, pages 148-163. Springer, 2009.
\newblock{\url{arXiv:1005.0737}}, \doi{10.1007/978-3-642-02348-4\_11}.

\bibitem{narendran}
D. ~Benanav, D.~Kapur, P.~Narendran, and L.~Wang.
\newblock Complexity of matching problems.
\newblock In {\em Journal of Symbolic Computation}, 3(1/2): 203--216, 1987.
\doi{10.1007/3-540-15976-2\_22}.

\bibitem{BeC06}
V.~Bernat and H.~Comon-Lundh.
\newblock Normal proofs in intruder theories.
\newblock In {\em Proc. 11$^{th}$ Asian Computing Science Conference, Advances in Computer Science - Secure Software
               and Related Issues (ASIAN'06)}, volume 4435 of {\em LNCS}, pages 151--166. Springer-Verlag, 2006.
\doi{10.1007/978-3-540-77505-8\_12}.

\bibitem{proverif}
B.~Blanchet.
\newblock An Efficient Cryptographic Protocol Verifier Based on Prolog Rules.
\newblock In {\em Proc. 14$^{th}$ IEEE Computer Security Foundations Workshop (CSFW'01)}, pages 82--96, IEEE Comp. Soc., 2001.
\newblock{\url{http://doi.ieeecomputersociety.org/10.1109/CSFW.2001.930138}}.

\bibitem{cryptoverif}
B.~Blanchet.
\newblock A Computationally Sound Mechanized Prover for Security Protocols.
\newblock In {\em IEEE Transactions on Dependable and Secure Computing}, volume 5 (4), pages 193--207, 2008.
\doi{10.1109/TDSC.2007.1005}

\bibitem{BouConDe}
A.~Boudet, E.~Contejean and H.~Devie.
\newblock A new AC Unification Algorithm with an Algorithm for Solving Systems of Linear Diophantine Equations.
\newblock In {\em  Proc. 5$^{th}$ Annual Symposium on Logic in Computer
               Science (LICS '90)}, pages 289--299, 1990.
\doi{10.1109/LICS.1990.113755}.

\bibitem{Bursucconstraints}
B.~Bursuc, H.~Comon-Lundh, and S.~Delaune.
\newblock Deducibility constraints, equational theory and electronic money.
\newblock In {\em Rewriting, Computation and Proof, Essays Dedicated to Jean-Pierre
               Jouannaud on the occasion of his 60th Birthday}, volume 4600 of {\em
 LNCS}, pages 196--212. Springer-Verlag, 2007.
\doi{10.1007/978-3-540-73147-4\_10}.

\bibitem{Chaum}
D. ~Chaum.
\newblock Blind Signatures for Untraceable Payments.
\newblock In {\em Proc. of Advances in Cryptology  (CRYPTO'82)}, pages 199--203, Plenum Press, 1982. 
\newblock{\url{http://blog.koehntopp.de/uploads/Chaum.BlindSigForPayment.1982.PDF}}.

\bibitem{ClauFor}
M.~Clausen and A.~Fortenbacher.
\newblock Efficient Solution of Linear Diophantine Equations.
\newblock In {\em Journal of Symbolic Computation}, Volume 8(1-2), pages 201--216, 1989.
\doi{10.1016/S0747-7171(89)80025-2}.

\bibitem{CoLuSh03}
H.~Comon-Lundh and V. Shmatikov. 
\newblock Intruder Deduction, Constraint Solving and Insecurity Decisions in Presence of Exclusive or.
\newblock In {\em Proc. 18$^{th}$ IEEE Symposium on Logic in Computer Science (LICS'03)}, pages 271--280. IEEE Comp. Soc., 2003.
\newblock{\url{http://doi.ieeecomputersociety.org/10.1109/LICS.2003.1210067}}.


\bibitem{survey}
V.~Cortier, S.~Delaune, and P.~Lafourcade.
\newblock A survey of algebraic properties used in cryptographic protocols.
\newblock {\em Journal of Computer Security}, 14(1):1--43, 2006.


\bibitem{delaune}
S.~Delaune.
\newblock {\em V\'erification des protocoles cryptographiques et propri\'et\'es
 alg\'ebriques}.
\newblock PhD thesis, \'Ecole Normale Sup\'erieure de Cachan, 2006.
\newblock{\url{http://tel.archives-ouvertes.fr/tel-00132677/en/}}.


\bibitem{De2006}
S.~Delaune.
\newblock Easy Intruder Deduction Problems with Homomorphisms.
\newblock { \em Information Processing Letters}, volume 97(6), pages 213--218, 2006.
\doi{10.1016/j.ipl.2005.11.008}.

\bibitem{dolev}
D. ~Dolev and A. ~Yao.
\newblock On the security of public keys protocols.
\newblock In {\em IEEE Transactions on Information Theory}, volume 29(2), pages 198--208, 1983.
\newblock{\url{http://doi.ieeecomputersociety.org/10.1109/SFCS.1981.32}}.


\bibitem{maudeNPA07}
S.~Escobar,  C.~Meadows and J.~Meseguer.
\newblock Maude-NPA: Cryptographic Protocol Analysis Modulo Equational Properties.
\newblock In {\em Foundations of Security Analysis and Design V, FOSAD 2007/2008/2009
               Tutorial Lectures}, volume 5705 of {\em
 LNCS}, pages 1--50. Springer-Verlag, 2007.
\doi{10.1007/978-3-642-03829-7\_1}.

\bibitem{frumkin}
M. A.~Frumkin.
\newblock Polynomial time Algorithms in the Theory of Linear Diophantine Equations.
\newblock In {\em Proc. of Fundamentals of Computation Theory}, volume 56 of {\em LNCS}, pages 386--392, Springer-Verlag, 1977.
\doi{10.1007/3-540-08442-8\_106}.


\bibitem{lafourcade}
P.~Lafourcade, D.~Lugiez and R.~Treinen.
\newblock Intruder Deduction for {\it AC}-Like Equational Theories with Homomorphisms
\newblock In {\em Proc. 16$^{th}$ International Conference on Term Rewriting and Applications (RTA'05)}, volume 3467 of {\em LNCS}, pages 308--322, Springer-Verlag, 2005.
\doi{10.1007/978-3-540-32033-3\_23}.

\bibitem{Laf07}
P.~Lafourcade.
\newblock Intruder Deduction for the equational theory of exclusive-or with commutative and distributive encryption.
\newblock In {\em Electr. Notes Theor. Comput. Sci.}, volume 171(4): 37--57, 2007.
\doi{10.1016/j.entcs.2007.02.054}.

\bibitem{mcallester}
D.~McAllester.
\newblock Automatic recognition of tractability in inference relations.
\newblock {\em Journal of the ACM}, volume 40, pages 284--303, 1990.
\doi{10.1145/151261.151265}.

\bibitem{nantesayala}
D.~Nantes-Sobrinho and M.~Ayala-Rinc\'on.
\newblock Reduction of the Intruder Deduction Problem into Equational Elementary Deduction for Electronic Purse Protocols with Blind Signatures.
\newblock In {\em Proc.  17$^{th}$ Int. Workshop on Logic, Language, Information and Computation
              (WoLLIC'10)}, volume 6188 of  {\em LNCS},
pages 218--231, Springer-Verlag, 2010.
\doi{10.1007/978-3-642-13824-9\_18}.

\bibitem{papadimitriou}
C.~Papadimitriou.
\newblock{Computational Complexity.}
\newblock{Addison-Wesley, Inc}.

\bibitem{Tiu2007}
A.~Tiu.
\newblock A trace based simulation for the spi calculus: An extended abstract. 
\newblock In  {\em Proc. 5$^{th}$ Asian Symposium on Programming Languages and Systems (APLAS'07)}, volume 4807 of {\em LNCS}, pages 367--382, Springer-Verlag, 2007. 
\newblock{\url{arXiv:0901.2166}}.

\bibitem{TiGo2009}
A.~Tiu and R.~Gor\'e and J.~Dawson.
\newblock A proof theoretic analysis of intruder theories.
\newblock In {\em Proc.  20$^{th}$ International
               Conference on Rewriting Techniques and Applications (RTA'09)},
 volume 5595 of {\em LNCS}, pages 103--117.
 Springer-Verlag, 2009.
 \doi{10.2168/LMCS-6(3:12)2010}.
 
\end{thebibliography}



\end{document}